\theoremstyle{plain}
\newtheorem{theorem}{Theorem}[section]
\newtheorem{lemma}[theorem]{Lemma}
\newtheorem{proposition}[theorem]{Proposition}
\theoremstyle{definition}
\newtheorem{assumption}[theorem]{Assumption}
\newtheorem{definition}[theorem]{Definition}
\newtheorem{remark}[theorem]{Remark}
\newcommand{\E}{{\mathbb{E}}}
\renewcommand{\P}{{\mathbb{P}}}
\newcommand{\R}{{\mathbb{R}}}
\newcommand{\F}{\mathcal{F}}
\definecolor{darkgreen}{rgb}{0,0.5,0}
\newcommand{\tr}{\operatorname{tr}}
\renewcommand{\P}{{\mathbb P}}
\renewcommand{\F}{{\mathbb F}}
\newcommand{\1}{{\mathbf{1}}}
\newcommand{\cD}{{\cal D}}
\newcommand{\cE}{{\cal E}}
\newcommand{\cF}{{\cal F}}
\newcommand{\cL}{{\cal L}}
\newcommand{\cS}{{\cal S}}
\newcommand{\cU}{{\cal U}}
\newcommand{\be}{\begin{equation}}
\newcommand{\ee}{\end{equation}}
\newcommand{\bea}{\begin{eqnarray}}
\newcommand{\eea}{\end{eqnarray}}
\newcommand{\beast}{\begin{eqnarray*}}
\newcommand{\eeast}{\end{eqnarray*}}
\newcommand{\bproof}{\begin{proof}}
\newcommand{\eproof}{\end{proof}}
\newcommand{\domain}{{\R^d\times\R_0^+}}
\newcommand{\domnew}{{\R \times \R^{d-1}\times\R_0^+}}
\newcommand{\usol}{unique local solution}
\newcommand{\uss}{unique global strong solution}
\title{Optimal Control of an Energy Storage Facility Under a Changing Economic Environment and Partial Information}
\author{Anton A.~Shardin and Michaela Sz\"olgyenyi}
\begin{document}

\date{Preprint, April 2016}

\maketitle


\begin{abstract}

In this paper we consider an energy storage optimization problem in finite time in a model with partial information that allows for a changing economic environment.
The state process consists of the storage level controlled by the storage manager and the energy price process, which is a diffusion process the drift of which is assumed to be unobservable. We apply filtering theory to find an alternative state process which is adapted to our observation filtration. For this alternative state process we derive the associated Hamilton-Jacobi-Bellman equation and solve the optimization problem numerically. This results in a candidate for the optimal policy for which it is a-priori
not clear whether the controlled state process exists.
Hence, we prove an existence and uniqueness result for a class of time-inhomogeneous stochastic differential equations with discontinuous drift and singular diffusion coefficient.
Finally, we apply our result to prove admissibility of the candidate optimal control.\\

\noindent Keywords: energy storage optimization, hidden Markov model, stochastic differential equation, discontinuous drift, degenerate diffusion\\
Mathematics Subject Classification (2010): 93E20, 93E11, 60H10\\
JEL Classification: C61, G1
\end{abstract}


\centerline{\underline{\hspace*{18cm}}}

\noindent A.~A.~Shardin \\
Mathematical Institute, BTU Cottbus-Senftenberg, 03044 Cottbus, Germany\\

\noindent M.~Sz\"olgyenyi \Letter \\
Institute of Statistics and Mathematics, Vienna University of Economics and Business, 1020 Vienna, Austria\\
michaela.szoelgyenyi@wu.ac.at

\centerline{\underline{\hspace*{18cm}}}

\section{Introduction}
\label{sec:Introduction}

In this paper we study a stochastic optimization problem for energy storage management over a finite time horizon.
In order to incorporate unobservable changes of the economy into our model, we allow the energy price to depend on exogenous factors that cannot be observed directly.
Our model extends existing approaches from the literature in the following way: we allow for regime switching and partial information about the energy price process.
Assuming the agent has only incomplete information about the energy market makes the model more realistic and relevant for real application.

The aim of the agent is to maximize the profit from an energy storage facility by choosing the optimal rate of charging, or discharging the energy storage facility. This is achieved by buying at low- and selling at high prices.\\

\citet{thompson2009} point out the necessity of "{\em investment analysis and optimization methodologies capable of accurately accounting for the various operating characteristics of real storage facilities}" and mention 
"{\em enormous profit opportunities for storage operators}".
This is confirmed by \citet{ludkovski2010}: "{\em with the growing importance of energy commodities, sophisticated valuation of energy storage becomes an integral aspect of functioning financial markets.}"
"{\em Storage allows for inter-temporal transfer of the commodity and permits exploitation of the fluctuating market prices.}"\cite{ludkovski2010}\\

The motivation for solving the optimal control problem of an energy storage manager is two-fold.
The storage manager wants to act optimally in the market.
For this, she needs to know the optimal storage policy.
Hence she essentially aims to buy energy when prices are low, and sell energy when prices are high.
Furthermore, an agent in the energy market wants to know the value of the optimally controlled energy storage facility,
which corresponds to  the outcome of this optimization problem.\\

Stochastic optimization problems for energy storage facilities are studied by, e.g., \citet{chen2007}, \citet{ludkovski2010}, \citet{thompson2009}, and \citet{ware2013}. They consider natural gas storages and study the problem of determining the optimal charging/discharging strategy maximizing the expected discounted reward over a finite time horizon.

In contrast to that \citet{thompson2004} and \citet{zhao2009} study a stochastic optimal control problem for a pump-storage facility over a finite time horizon. They control the optimal amount of water that should be pumped/released in order to maximize the expected discounted reward from consuming/producing energy.
All these models assume a fixed economic environment as well as full information about the energy price process.
\citet{chen2010} extend \cite{chen2007} to allow for changes in modelling the drift. For this they assume the drift to be driven by an observable two-state Markov chain.\\

The energy price clearly depends on exogenous factors, which determine supply and demand,
where exogenous means, that these factors are independent of the driving source of uncertainty in the energy price process.
Permanent energy demand and supply shifts lead to structural changes in the energy price.

A reason for such changes can be seasonal differences.
For example, energy is consumed in winter for cooling and in summer for heating.
Furthermore, there are intraday fluctuations.
However, \citet{ahn2002} find that
"{\em storage value extracted from the seasonal spreads significantly underestimates the value obtained by optimal injection/withdrawal strategies.}"
Seasonalities cannot explain all of the energy price changes, which justifies to concentrate on additional determinants of price fluctuations.
Whereas seasonalities are clearly observable, other kinds of effects may not be.

The energy price depends on the economic environment, and on the geo-political situation.
These are the kind of effects we would like to focus on in our model. It is thus necessary to allow for a state dependence in the model for the energy price process.
Furthermore, as the energy market is liberalized, the energy price also depends on financial risks.
For this reason, we use a modelling approach from mathematical finance.\\

We model the energy price as a diffusion process the drift of which is driven by a Markov chain with non-observable state. There are economical as well as statistical reasons why it is more realistic to assume the underlying state process to be unobservable. A shift in the economic environment is often not trackable at the moment it happens, but only over time. Hence, we want to allow for uncertainty concerning the current state of the economy. From the statistical point of view the drift of a diffusion process is particularly hard to estimate over time, see, e.g., \citet[Chapter 4.2]{rogers2013}.\\

Stochastic optimization problems under partial information, especially those where the state process is influenced by an exogenous hidden Markov chain, have been studied intensively in the literature of mathematical finance and also in insurance mathematics, see, e.g., \citet{elliott1995}, \citet{lakner1995}, \citet{honda2003}, \citet{sass2004}, \citet{rieder2005},
\citet{frey2014}, \citet{sz12}, and \citet{sz16}.

\citet{shardin2016} model a pumped hydro storage in a changing economic environment, and assume partial information about the energy price process. However, for studying their optimization problem, they need to regularize their underlying process for technical reasons. Herein, we are able to spare out this step.\\

One of the outcomes of our model is that the candidate for the optimal control policy is of threshold type, i.e., there are certain levels of the energy price where the energy storage manager changes her behaviour. Thus, the controlled underlying process has a discontinuous drift coefficient. Additionally, due to the structure of the model, the diffusion coefficient is degenerate. Therefore, standard results about existence and uniqueness of solutions to stochastic differential equations (SDEs) do not apply. Hence, to prove admissibility of the resulting control policy, we need to prove an existence and uniqueness theorem extending the result of \citet{sz14} to the time-inhomogeneous case. A more detailed introduction to this issue is contained in Section \ref{sec:Existence}.\\

The main contribution of this paper is that we study a stochastic optimization problem under partial information in a model that allows for a changing economic environment for an energy storage facility and present an extensive numerical study of the outcomes of this model, namely the value function as well as the candidate for the optimal policy.
On the technical side we contribute that although our problem leads to a degenerate underlying process, all our results are obtained without any regularization techniques.
(For examples where regularization was applied, we refer to \cite{frey2014, shardin2016}.)
For verifying the candidate for the optimal strategy it is essential that the underlying controlled process exists.
We prove a theoretical result about existence and uniqueness to a class of time-inhomogeneous SDEs with discontinuous drift and degenerate diffusion coefficient, which generalizes the result from \cite{sz14}.\\

The paper is organized as follows.
In Section \ref{sec:opt} we formulate the energy storage optimization problem, describe the target function together with the underlying state process and derive the associated Hamilton-Jacobi-Bellman (HJB) equation.
Then we solve the optimization problem numerically in Section \ref{sec:Num}. The goal is to present an extensive study of the outcomes of our model.
In Section \ref{sec:Existence} we present an existence and uniqueness result for a class of time-inhomogeneous SDEs with discontinuous drift and degenerate diffusion coefficient. The (constructive) proof is moved to Appendix \ref{app:proof}.
Note that Section \ref{sec:Existence} and Appendix \ref{app:proof} are self contained and hence can be read isolated from Section \ref{sec:opt}.
In Section \ref{sec:Energy} we apply our result to prove admissibility of the candidate for the optimal control policy presented in Section \ref{sec:Num}, i.e., we show that the system of underlying state processes controlled by this policy has a \uss{}.
Section \ref{sec:Outlook} contains an outlook onto different fields of applied mathematics where the result on SDEs presented herein can potentially be applied.

\section{The energy storage optimization problem}
\label{sec:opt}

Let the filtered probability space $(\cE,\cF,\F,\P)$, where the filtration satisfies the usual conditions, carry all stochastic variables appearing herein. Further, let $(\widetilde B_t)_{t\ge0}$ be a standard Brownian motion.\\

We model the energy price process $S=(S_t)_{t\ge0}$ as an Ornstein-Uhlenbeck (OU) process,

\begin{equation}\label{eq:energy}
 dS_t = \kappa (\mu(Y_t)+K(t)-S_t) dt + \sigma d\widetilde B_t \, , \quad S_0=s_0 \,,
\end{equation}
with constant mean reversion speed $\kappa$, (seasonality free) mean reversion level $\mu:\R^D \longrightarrow \R$, seasonality function $K:[0,T] \longrightarrow \mathbb{R}$, and constant diffusion parameter $\sigma$. $\mu(Y_t)+K(t)$ is the long-term equilibrium price.

For the seasonality function we assume $K \in {C}^2([0,T])$, and satisfies a linear growth condition, i.e., there exist constants $c_1,c_2>0$ such that $|K(t)| \le c_1+c_2|t|$ for all $t \in [0,T]$. An example for the choice of this cyclical component is
\[
K(t)=K_S \cos \Bigl( \frac{2 \pi (t-t_S)}{\Delta} \Bigr) \, , 
\]
where $K_S$ determines the seasonal price trend, $t_S$ represents the time of the seasonal peak of the equilibrium price, and $\Delta$ is the length of one season, c.f.~\cite[Section 2.2]{chen2008b} and \cite[Section 3]{thompson2004}.

$Y=(Y_t)_{t\ge0}$ is an exogenous Markov chain with a $D$-dimensional state space, which is w.l.o.g.~given by $E_D=\{e_1,\ldots,e_D\}$, where $e_i$ is the $i$-th unit vector in the $\R^D$, $i=1,\ldots,D$. We denote the given intensity matrix of $Y$ as $\Lambda=(\lambda_{ij})_{i,j=1}^D$ and assume knowledge of the initial distribution of $Y$. We assume $\mu(e_i)=\mu_i$, $i=1,\ldots,D$, i.e., $\mu$ is constant for each state. W.l.o.g.~we assume $\mu_1>\ldots>\mu_D$.

The energy price is observable on the spot market, while the mean-reversion level is hard to estimate. Therefore, we assume that the information available to the energy storage manager is contained in the augmented filtration generated by the energy price process $\mathbb{F}^{S}=(\mathcal{F}_t^{S})_{t \ge 0}$, where $\mathcal{F}_t^{S}=\sigma\{ S_r , r \le t \}$.

We choose an OU process to model the energy price, since on the energy market it is necessary to allow for negative prices, and since mean-reversion is a stylized fact for energy prices.\\

Based on this energy price the manager of an energy storage facility has to decide when to charge, or discharge the storage and at which rate. Therefore, the rate of charging $u=(u_t)_{t\ge0}$ serves as the control. For discharging the quantity is negative. We model the energy storage level $Q=(Q_t)_{t\ge0}$ as
\begin{equation}
\label{fuellstandsde}
dQ_t=u_t \,dt \, , \quad Q_0=q_0 \, . 
\end{equation}

Since $\mu$ is not directly observable, we are in a situation of partial information. To overcome this issue, we apply filtering theory.

\paragraph{Filtering problem.} Since $\mu$ is not adapted to the observation filtration $\F^{S}$, it follows that the whole drift of $S$ is not adapted to $\F^{S}$. For convenience we denote this drift as
$$
a_t:=a(S_t,Y_t,t):=\kappa (\mu(Y_t)+K(t)-S_t)=\sum\limits_{i=1}^{D} \kappa (\mu_i+K(t)-S_t) \1_{\{  Y_t = e_i   \}} =:\sum_{i=1}^D a(S_t,e_i,t) \1_{\{Y_t=e_i\}}\, .
$$ 
Let the initial distribution of $Y$ be denoted as $\nu_0=(\nu_0^1,\ldots,\nu_0^D)$ and let $\nu_0^i>0$ for all $i=1,\ldots,D$.
Now, we estimate the drift $(a(S_t,Y_t,t))_{t \ge 0}$ given the information generated by the energy price process $S$ following \cite{liptser1977, wonham1964}:
\begin{equation}
\begin{split}
\label{wonhamfilterdrift}
\widehat{a}_t&:=\mathbb{E}\Bigl ( a(S_t,Y_t,t)  \,  \Bigl | \, \mathcal{F}_t^{S} \Bigr)=\mathbb{E}\Bigl ( \sum\limits_{i=1}^{D} a(S_t,e_i,t) \mathbf{1}_{\{  Y_t = e_i   \}} \,  \Bigl |  \, \mathcal{F}_t^{S} \Bigr)\\
&=\sum\limits_{i=1}^{D} a(S_t,e_i,t) \,\mathbb{P}\Bigl(    Y_t = e_i   \, \Bigl | \, \mathcal{F}_t^{S} \Bigr)=\sum\limits_{i=1}^{D} a(S_t,e_i,t) \pi_t^i    \, ,
\end{split}
\end{equation}
where $\pi_t^i$, $i=1,\ldots,D$, $t\ge0$, is the conditional probability for $Y$ being in state $i$ at time $t$. We further denote $\pi=(\pi_t)_{t\ge 0}$ and $\pi_t=(\pi_t^1,\ldots,\pi_t^D)$.
The following proposition can be found in \citet[Theorem 9.1]{liptser1977}.

\begin{proposition}\label{prop:filter}
The conditional probabilities
$$
\pi_t^i:=\mathbb{P}(Y_t=e_i \mid \mathcal{F}_t^{S}) \, , \quad e_i \in E_D \, , \quad i=1,\ldots,D\,,
$$ 
solve the following system of SDEs:
\begin{equation}
\label{filtergleichung1}
\pi_t^i=\nu_0^i+\int\limits_0^t \Bigl(\sum\limits_{k=1}^{D} \lambda_{ki} \pi_r^k\Bigr)  dr +\int\limits_0^t \pi_r^i \frac{\bigl( a(S_r,e_i,r)-\widehat{a}_r   \bigr)}{\sigma}dB_r \, , \quad i=1,\ldots,D \, ,
\end{equation}
where $B=(B_t)_{t \ge 0}$ is an $\mathbb{F}^{S}$-adapted Brownian motion given by
$$
B_t=\int\limits_0^t \frac{dS_r-\widehat{a}_r dr}{\sigma}\,,
$$
often referred to as innovation process.
\end{proposition}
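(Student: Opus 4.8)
The plan is to follow the innovations approach to nonlinear filtering, specialised to the finite-state signal $Y$. The starting point is the semimartingale representation of the Markov chain: since $Y$ takes values in the unit vectors $E_D$ and has intensity matrix $\Lambda$, Dynkin's formula shows that each indicator process $P_t^i := \1_{\{Y_t=e_i\}}$ admits the decomposition
\begin{equation*}
P_t^i = \nu_0^i + \int_0^t \sum_{k=1}^D \lambda_{ki} P_r^k \, dr + M_t^i \, ,
\end{equation*}
where $M^i$ is an $\mathbb{F}$-martingale, orthogonal to $\widetilde B$ because the jumps of the exogenous chain $Y$ share no common randomness with the driving Brownian motion of $S$. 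Taking conditional expectations gives $\pi_t^i = \E(P_t^i \mid \mathcal{F}_t^S)$, and the goal is to identify the dynamics of this projection.

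First I would verify that the innovation process $B$ is an $\mathbb{F}^S$-Brownian motion. Writing $dS_r = \widehat a_r\,dr + (a_r - \widehat a_r)\,dr + \sigma\,d\widetilde B_r$, one checks that $B$ is a continuous $\mathbb{F}^S$-martingale: for $s<t$ the identity $\E(B_t - B_s \mid \mathcal{F}_s^S)=0$ follows from the tower property together with $\E(a_r - \widehat a_r \mid \mathcal{F}_r^S)=0$, which holds by definition of $\widehat a_r$. Since $B$ inherits the quadratic variation $\langle B\rangle_t = t$ from $S$, Lévy's characterisation yields that $B$ is an $\mathbb{F}^S$-Brownian motion. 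The integrability of $\widehat a_r$ on $[0,t]$ needed for $B$ to be well defined follows from the bound $|\widehat a_r| \le C(1+|S_r|)$ (using that $K$ is bounded on $[0,T]$ and $\pi_r^i\in[0,1]$) together with the finite moments of the OU process.

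Next I would derive the filter equation by projecting the signal dynamics. Applying the Fujisaki-Kallianpur-Kunita theorem to the semimartingale $P^i$ gives
\begin{equation*}
d\pi_t^i = \Bigl(\sum_{k=1}^D \lambda_{ki}\pi_t^k\Bigr) dt + \frac{1}{\sigma}\Bigl( \widehat{P^i a}_t - \pi_t^i \widehat a_t \Bigr) dB_t \, ,
\end{equation*}
where $\widehat{P^i a}_t := \E(P_t^i a_t \mid \mathcal{F}_t^S)$; the absence of an extra correction term is exactly the orthogonality $\langle M^i, \widetilde B\rangle \equiv 0$ noted above. It remains to simplify the gain. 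Using $a_t = \sum_{j} a(S_t,e_j,t)P_t^j$ and $P_t^i P_t^j = \1_{\{i=j\}}P_t^i$, one obtains $\widehat{P^i a}_t = a(S_t,e_i,t)\,\pi_t^i$, so the gain collapses to $\pi_t^i\,(a(S_t,e_i,t) - \widehat a_t)/\sigma$, which is precisely the coefficient in \eqref{filtergleichung1}. Integrating yields the claimed system.

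The main obstacle is the justification underlying the FKK representation, namely that every $\mathbb{F}^S$-martingale can be written as a stochastic integral against the innovation Brownian motion $B$; this martingale-representation step is the heart of the innovations method and is where most of the technical work lies, since it is what forces the gain into the displayed form. A cleaner alternative that sidesteps it is the reference-measure (Kallianpur-Striebel / Zakai) route: under a measure $\mathbb{P}^0$ obtained via Girsanov's theorem, $\sigma^{-1}S$ becomes a Brownian motion independent of $Y$, the unnormalised conditional probabilities solve a \emph{linear} Zakai equation, and applying Itô's formula to the normalisation $\pi_t^i = \rho_t^i / \sum_j \rho_t^j$ reproduces the nonlinear Wonham dynamics above. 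Either route gives the stated result, which is classical and recorded in \citet[Theorem 9.1]{liptser1977}.
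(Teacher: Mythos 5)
Your proposal is correct and matches the paper, which offers no derivation of its own but simply cites Liptser--Shiryaev, Theorem 9.1 (the Wonham filter); the innovations/FKK argument you sketch, with the gain collapsing via $P_t^iP_t^j=\1_{\{i=j\}}P_t^i$ and the vanishing correction term from the independence of $Y$ and $\widetilde B$, is exactly the standard proof behind that cited result. The only nitpick is that the semimartingale decomposition should start from the random initial indicator $P_0^i$ rather than $\nu_0^i$; the constant $\nu_0^i$ only appears after projecting onto $\mathcal{F}_0^S$.
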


In particular this means that $\widehat{a}$ is Markovian, and
\begin{equation}
\label{filtermarkov}
\widehat{a}_t =\widehat{a}(S_t,\pi_t,t)=\kappa \Bigl( \bigl< \mu,\pi_t \bigr>+K(t)-S_t\Bigr) \, ,  \quad \bigl< \mu,\pi_t \bigr>:=\sum\limits_{i=1}^D \mu_i \pi_t^i\,.
\end{equation}
Finally, we find an equivalent representation for the energy price process \eqref{eq:energy} in terms of the innovation process $B$:
\begin{equation*}
dS_t=\widehat{a}(S_t,\pi_t,t) \, dt +\sigma dB_t \, , \quad S_0=s_0  \, ,
\end{equation*}
where $\widehat{a}$ is defined in \eqref{filtermarkov}. The advantage in this representation is that now all ingredients of our model are adapted to the observation filtration and hence the optimization problem is well-defined.\\

Thus, we have transformed the original model under partial information to the following model under full information:
\begin{equation}
\begin{alignedat}{3}\label{eq:system}
dS_t&=\widehat{a}(S_t,\pi_t,t) \, dt +\sigma dB_t \, , &S_0&=s_0  \, ,\\
dQ_t&=u_t \,dt \, ,  &Q_0&=q_0  \, ,\\
d\pi_t&=\Lambda^{\top} \pi_t\, dt+ \sigma^{-1}\Bigl(\mathrm{diag}(\pi_t) a_t-\widehat{a}_t \pi_t \Bigr) d B_t \, , \quad &\pi_0&=\nu_0\,.
\end{alignedat}
\end{equation}
If $S,Q,\pi$ are started at time $t\in[0,T]$, we denote the starting values by $s,q,\nu=(\nu_1,\ldots,\nu_D)$, respectively.

In the end we remark that we can reduce the dimension of \eqref{eq:system} by one by setting $\pi_t^D=1-\sum_{i=1}^{D-1}\pi_t^i$ for all $t\ge0$.
Therefore, the correct state space for the filter is the simplex $\cS:= \{(\nu_1,\dots,\nu_{D-1}) \in (0,1)^{D-1}: \sum_{i=1}^{D-1} \nu_i < 1\}$ with closure $\bar \cS:= \{(\nu_1,\dots,\nu_{D-1}) \in [0,1]^{D-1}: \sum_{i=1}^{D-1} \nu_i \le 1\}$.
Hence the cost for obtaining full information are $D-1$ additional dimensions for the filter.

\paragraph{The optimization problem.} We introduce lower and upper capacity bounds of the energy storage facility $0 \le \underline{q} \le Q_t \le \overline{q}$ and say that the energy storage is full, or empty at $\overline q$, or $\underline q$, respectively. For the control $u$ we assume $u \in {\cU}$, where ${\cU}$ is the set of admissible controls imposing that $u$ is progressively measurable 
 and of feedback type (cf. \cite[Section 2.1]{ludkovski2010} and \cite[Section 2]{ware2013}).
Furthermore, the process $(S,Q,\pi)$ controlled by $u$ needs to exist, i.e., system \eqref{eq:system} needs to have a solution.
Additionally we impose conditions on the control such that the energy storage facility is modelled properly from an operational point of view.
Concretely, charging and discharging should not be possible at the maximum rate, if the energy storage facility is nearly full or empty, respectively.
For technical reasons these bounds are modelled in a continuous way, i.e., we impose smooth transition (cf.~\cite[Section 2.1]{chen2007}, \cite[Section 3]{thompson2009}, and \cite[Section 4.1]{ware2013}).
So we have
$u_t \in [\underline{u}(Q_t),\overline{u}(Q_t)]$
for $t \in [0,T]$, where $\underline u, \overline u: \R \longrightarrow \R$, with  $-M_u \le \underline{u}(q) \le 0$, $0 \le \overline{u}(q) \le M_u$, and $\underline u (\underline q)=\overline u(\overline q)=0$ are sufficiently smooth and bounded functions describing the maximal rates at which energy is sold or bought, respectively, and $M_u$ is constant.
Figure \ref{fig:admcontrols} illustrates the functions $\underline{u},\overline{u}$ on $[\underline q,\overline q]$. Note that for certain energy storage facilities, e.g., gas storages, $\underline{u},\overline{u}$ are falling for all levels of $q$. However, as we study a general energy storage facility, we do not lay down on a specific physical law determining these bounds. The interested reader may consult, e.g., \cite{thompson2009}.\\

\begin{figure}[h]
\begin{center}
\includegraphics[scale=0.3]{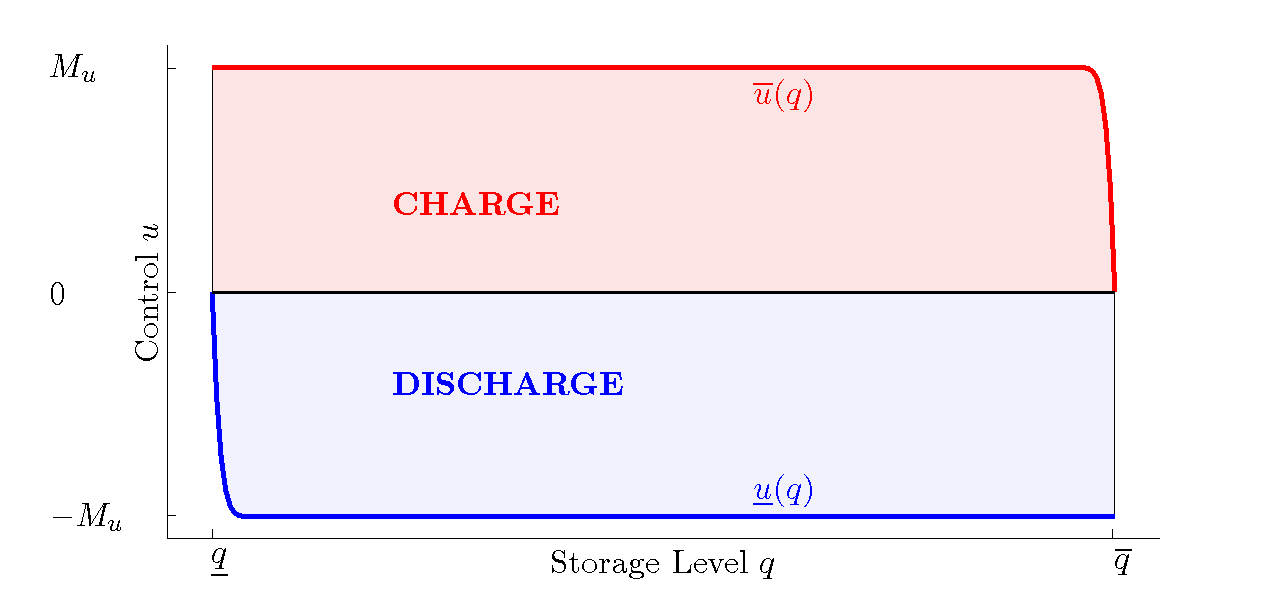}
\end{center}
\caption{The functions $\underline u$ and $\overline u$.}\label{fig:admcontrols}
\end{figure}

Our choice of $\underline u, \overline u$ guarantees that the process $Q$ is reflected at the boundary and stays inside $[\underline q,\overline q]$, if started inside.
Note that our model includes the most important operational conditions from a physical point of view, but neglects detailed modelling of some technical requirements, as the main purpose of this paper is to gain an impression of the effect of unobservable exogenous impacts.\\

The aim of the energy storage manager is to maximize the profit from the energy storage facility.
Define ${F}: \mathbb{R}  \times [\underline{q},\overline{q}] \times [-M_u,M_u] \longrightarrow \mathbb{R}$ by
\begin{eqnarray*}
\label{objective}
{F}(s,q,u) &=& \left\{
\begin{array}{lll}
-u \, \Bigl(s+d_+  \Bigr ) -c_0 \,q, & u \ge 0 & \text{(charging)} \\[1ex]
-u \, \Bigl(s-d_-  \Bigr)-c_0 \,q, & u < 0 & \text{(discharging)} \,,
\end{array}
\right.
\end{eqnarray*}
with fixed costs $d_{\pm}\ge 0$ and with storage costs $c_{0} \ge 0$. For the definition of $F$, cf. \cite[Section 2.3]{chen2007}, \cite[Section 2]{ludkovski2010}, \cite[Section 2]{thompson2009}, and \cite[Section 2]{ware2013}.\\

At the end of the planning horizon $T$, the value of the energy remaining in the storage facility is the reward that would be gained on immediately selling everything on the market at price $c_S S_T$, $0<c_S<1$. $c_S$ is a scrap rate that has to be paid by the storage manager.
Hence we define the {\em terminal reward function} $\Phi: \mathbb{R} \times [\underline{q},\overline{q}] \longrightarrow \mathbb{R}$ by 
\begin{equation}
\label{terminalfunktion}
\Phi(S_T,Q_T)=Q_T \Bigl(c_S S_T-d_-\Bigr) \,.
\end{equation}

Now we are ready to formulate the stochastic optimization problem under study. Denote $\Omega:=\mathbb{R} \times [\underline{q},\overline{q}] \times \bar \cS \times [0,T]$. We would like to maximize the expected discounted profit $J:\Omega \longrightarrow \mathbb{R}$ given by
\begin{equation*}
\label{rewardfunction2inf}
J(s,q,\nu,t;u)=\mathbb{E}_{s,q,\nu,t} \Bigl [ \int\limits_t^{T} e^{-\rho (r-t) }{F}(S_r,Q_r,u_r)dr + e^{-\rho(T-t)}\Phi(S_T,Q_T)\Bigr ]
\end{equation*}
by finding the optimal control $u \in \cU$, where $\rho>0$ is a constant discount rate and $\E_{s,q,\nu,t}(\cdot)$ is the expectation with the starting values $S_t=s,Q_t=q,\pi_t=\nu$.

The aim is to find the optimal value function $V:\Omega\longrightarrow \mathbb{R}$ given by
\begin{equation}
\label{wertfunktion2inf}
V(s,q,\nu,t)=\sup_{u \in \cU} J(s,q,\nu,t;u) \, ,
\end{equation}
and (if it exists) the optimal control policy $u^*=\mbox{argmax}_{u \in \cU} J(s,q,\nu,t;u)$.\\

\begin{lemma}
 The value function $V$ is continuous and satisfies a linear growth condition, i.e., there are constants $c_1,c_2>0$ such that $|V(s,q,\nu,t)|\le c_1+c_2\|(s,q,\nu,t)\|$.
\end{lemma}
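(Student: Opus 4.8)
The plan is to treat the linear growth bound and the continuity of $V$ separately, exploiting three structural features of the problem: the storage level stays in the compact interval $[\underline q,\overline q]$, the control is uniformly bounded by $u_t\in[-M_u,M_u]$, and, crucially, in the system \eqref{eq:system} the pair $(S,\pi)$ evolves autonomously, i.e.\ its dynamics do not depend on $Q$ or on the control $u$. The finite horizon together with $\rho>0$ makes every discount factor $e^{-\rho(r-t)}\le 1$, so no growth can enter through time.

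For the linear growth estimate I would first bound the integrand and the terminal term affinely in $|S_r|$, uniformly over $u\in\cU$. Using $Q_r\in[\underline q,\overline q]$ and $|u_r|\le M_u$ one gets $|F(S_r,Q_r,u_r)|\le M_u(|S_r|+\max\{d_+,d_-\})+c_0\overline q$ and, from \eqref{terminalfunktion}, $|\Phi(S_T,Q_T)|\le \overline q\,(c_S|S_T|+d_-)$. It then remains to control $\E_{s,q,\nu,t}|S_r|$. Since in \eqref{filtermarkov} the mean reversion term $\langle\mu,\pi_r\rangle$ is bounded (it lies between $\mu_D$ and $\mu_1$ because $\pi_r\in\bar\cS$) and $K$ is bounded on the finite interval $[0,T]$, the $S$-SDE is an OU equation with bounded mean level; a standard moment estimate (solving explicitly, or Gronwall applied to $\E|S_r|$) yields $\E_{s,q,\nu,t}|S_r|\le |s|+C$ for a constant $C$ depending only on the model parameters and $T$, uniformly in $r\in[t,T]$ and in the bounded data $q,\nu,t$. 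Combining the three bounds gives $|J(s,q,\nu,t;u)|\le c_1+c_2|s|$ uniformly in $u$, hence the same bound for $V$; since $q,\nu,t$ range over compact sets, $|s|\le\|(s,q,\nu,t)\|$ up to an additive constant and the claimed linear growth follows.

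For continuity I would use $|V(x_1)-V(x_2)|\le \sup_{u\in\cU}|J(x_1;u)-J(x_2;u)|$ and show the right-hand side tends to $0$ as $x_2\to x_1$. The control-free pair $(S,\pi)$ depends continuously on its initial data $(s,\nu,t)$: the drift $\widehat a$ is globally Lipschitz in $(S,\pi)$ and continuous in $t$ (as $K\in C^2$), while the filter diffusion in \eqref{eq:system} is smooth and, because $\pi$ is confined to the compact simplex $\bar\cS$ and is only affinely coupled to $S$, locally Lipschitz with linear growth; the usual moment estimate plus Gronwall argument then gives $\E\,[\sup_{r\le T}(|S_r^{(1)}-S_r^{(2)}|+|\pi_r^{(1)}-\pi_r^{(2)}|)]\to0$. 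Since $Q_r=q+\int_t^r u_v\,dv$ with $|u|\le M_u$, the level process is Lipschitz in $(q,t)$ and, through the feedback structure, inherits the continuous dependence of the $(S,\pi)$ paths. Continuity of $F$ and $\Phi$ together with the linear growth domination from the first part (which provides the uniform integrability needed) then lets me pass to the limit by dominated convergence, uniformly in $u$.

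The hard part will be the continuity argument, and within it two points deserve care. First, the filter diffusion is nonlinear in $\pi$, so global Lipschitz estimates are not available off-hand; the remedy is to use that $\pi$ never leaves the compact set $\bar\cS$, turning the local Lipschitz bound into an effective one along the trajectories. Second, one must control the $Q$-dynamics and the shifting integration limits and horizon in the time variable $t$ uniformly over the admissible feedback controls $u\in\cU$; here the uniform bound $|u|\le M_u$ and the smoothness and boundedness assumptions on $\underline u,\overline u$ are what keep the estimate uniform in $u$. The linear growth bound, by contrast, I expect to be entirely routine once the moment estimate for $S$ is in place.
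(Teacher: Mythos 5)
The paper does not argue this lemma directly: its proof is a one-line citation of \cite[Chapter 3, Theorem 5]{krylov1980} together with the observation that $F$ and $\Phi$ are affine in $(s,q)$ with bounded slopes. Your proposal is therefore a genuinely different, self-contained route, and its first half succeeds: the bounds $|F(S_r,Q_r,u_r)|\le M_u(|S_r|+\max\{d_+,d_-\})+c_0\overline q$ and $|\Phi(S_T,Q_T)|\le\overline q(c_S|S_T|+d_-)$, combined with the OU moment estimate $\E_{s,q,\nu,t}|S_r|\le|s|+C$ (valid because $\langle\mu,\pi_r\rangle\in[\mu_D,\mu_1]$ and $K$ is bounded on $[0,T]$), give $|J|\le c_1+c_2|s|$ uniformly over admissible controls, and hence the linear growth of $V$. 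This is correct and in fact slightly sharper than the stated bound.

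The continuity half, however, has two concrete gaps, and they sit exactly where the cited Krylov theorem does the work. First, the inequality $|V(x_1)-V(x_2)|\le\sup_{u}|J(x_1;u)-J(x_2;u)|$ presupposes that the same control $u$ is admissible from both starting points; but admissibility here imposes the state-dependent constraint $u_r\in[\underline u(Q_r),\overline u(Q_r)]$, so the admissible sets at $(q_1,t)$ and $(q_2,t)$ differ and a near-optimal control for $x_1$ need not be admissible from $x_2$. You must either project the control onto the shifted constraint set and estimate the loss (using the smoothness of $\underline u,\overline u$), or argue as Krylov does with a control class that is uniform in the initial point. Second, your claim that $Q$ ``inherits the continuous dependence of the $(S,\pi)$ paths through the feedback structure'' fails for exactly the controls this paper cares about: threshold-type feedback laws are discontinuous in the state, so two trajectories started nearby can select different control values on a set of times of non-vanishing measure, and no Gronwall argument applies. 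The equicontinuity of $\{J(\cdot;u)\}$ is immediate for open-loop progressively measurable controls (where $Q_r^{(1)}-Q_r^{(2)}\equiv q_1-q_2$ and $(S,\pi)$ is control-free), but for feedback controls it is not, and you need to either work with the open-loop class when proving continuity of the supremum or supply a separate argument. As written, the continuity part is a plausible sketch of the standard proof but does not close.
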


\begin{proof}
 Follows directly from \cite[Chapter 3, Theorem 5]{krylov1980} and from the construction of $F$ and $\Phi$.
\end{proof}


\begin{remark}
The value function $V$ is also associated as the fair price of the real option, which represents the value of the energy storage facility at time $t$, if sold at time $T$.
\end{remark}

Applying the dynamic programming principle (cf. \cite[Section 3.3, Theorem 3.3.1]{pham2009}), we derive the HJB equation associated with system \eqref{eq:system} and optimization problem \eqref{wertfunktion2inf},
\begin{equation}
\begin{aligned}
\label{hjbTinf}
V_t+(\mathcal{L}-\rho)V+\sup_{{u} \in [\underline{u}(q),\overline{u}(q)]}\Bigl(u \, V_{q}+ {F}\Bigr ) &= 0 \, ,\\
V(s,q,\nu,T)&=\Phi(s,q)
\end{aligned}
\end{equation}
for all $(s,q,\nu,t) \in \Omega$
where 
\begin{align*}
\mathcal{L}{V}(s,q,\nu,t) &= \kappa \Bigl(\sum\limits_{i=1}^D \mu_i \pi_t^i+K(t) -s\Bigr) V_{s}(s,q,\nu,t) + \frac{1}{2}{\sigma}^2 V_{ss}(s,q,\nu,t) \\
&+\sum_{i=1}^D \Bigl( \sum_{k=1}^D(\lambda_{ki} \pi_t^k)\Bigr) V_{\nu_i}(s,q,\nu,t)+ \frac{1}{2}\sum\limits_{i=1}^D \kappa \pi_t^i \Bigl( \mu_i-\sum\limits_{k=1}^D \mu_k\pi_t^k  \Bigr) V_{s \nu_i}(s,q,\nu,t) \\
&
 + \frac{1}{2}\sum_{i,j=1}^D \frac{\kappa^2}{\sigma^2} \pi_t^i \pi_t^j \Bigl(\mu_i-\sum\limits_{k=1}^D \mu_k\pi_t^k\Bigr) \Bigl(\mu_j-\sum\limits_{k=1}^D \mu_k\pi_t^k\Bigr) V_{\nu_i\nu_j}(s,q,\nu,t)  \, .
\end{align*}

Solving the pointwise optimization problem results in the following candidate for the optimal charging policy:
\begin{equation*}
u_t^*=u^*(s,q,\nu,t)=
\begin{cases}
\overline{u}(q)\,,  & s \le \overline{s}(s,q,\nu,t) \\
0\,, & \overline{s}(s,q,\nu,t) < s < \underline{s}(s,q,\nu,t)   \\
\underline{u}(q)\,,  & \underline{s}(s,q,\nu,t) \le s  \, ,\\
\end{cases}
\end{equation*}
where
\begin{align}\label{eq:sws}
\overline{s}(s,q,\nu,t)=({V}_{q}(s,q,\nu,t)-d_{+})
\quad \text{and} \quad
\underline{s}(s,q,\nu,t)=({V}_{q}(s,q,\nu,t) + d_{-})
\end{align}
are the boundaries at which the strategy switches between buying energy, waiting, and selling energy.\\

As the HJB equation is degenerate parabolic, we have little hope to find an analytic solution, or to show existence of a classical solution.
For a problem under full information \citet{chen2007} bring up the possibility to prove verification in terms of viscosity solutions, which is a concept of weak solutions that only requires continuity of the solution, but is still strong enough to prove uniqueness and furthermore, to build the basis of numerical experiments.
For the concept of viscosity solutions we refer the interested reader to \citet{crandall1992}, \citet{fleming2006}, and to \citet{barles1991} for numerics.\\

Let $\partial \bar \Omega,\partial \bar \cS$ denote the boundary of the closure of $\Omega, \cS$, respectively.
We start with a definition of viscosity solutions to \eqref{hjbTinf}.

\begin{definition}[viscosity solution]\label{def:viscosity}

\begin{enumerate}
 \item A function $w:\bar{\Omega} \longrightarrow \R$ is a {\it viscosity subsolution} to \eqref{hjbTinf}, if
 \[
  \phi_t(\bar{s},\bar q,\bar{\nu},\bar t)+(\mathcal{L}-\rho)\phi(\bar{s},\bar q,\bar{\nu},\bar t)+\sup_{{u} \in [\underline{u}(q),\overline{u}(q)]}\Bigl(u \, \phi_{q}(\bar{s},\bar q,\bar{\nu},\bar t)+ F(\bar{s},\bar q,u)\Bigr ) \ge 0
 \]
 for all $(\bar{s},\bar q,\bar{\nu},\bar t) \in \Omega$ and for all $\phi \in C^2(\Omega)$ such that $w-\phi$ attains a maximum at $(\bar{s},\bar q,\bar{\nu},\bar t)$ with $w(\bar{s},\bar q,\bar{\nu},\bar t)=\phi(\bar{s},\bar q,\bar{\nu},\bar t)$.
 
 \item A function $w:\bar{\Omega} \longrightarrow \R$ is a {\it viscosity supersolution} to \eqref{hjbTinf}, if
 \[
  \psi_t(\bar{s},\bar q,\bar{\nu},\bar t)+(\mathcal{L}-\rho)\psi(\bar{s},\bar q,\bar{\nu},\bar t)+\sup_{{u} \in [\underline{u}(q),\overline{u}(q)]}\Bigl(u \, \psi_{q}(\bar{s},\bar q,\bar{\nu},\bar t)+ F(\bar{s},\bar q,u)\Bigr ) \le 0
 \]
 for all $(\bar{s},\bar q,\bar{\nu},\bar t) \in \Omega$ and for all $\psi \in C^2(\Omega)$ such that $w-\psi$ attains a minimum at $(\bar{s},\bar q,\bar{\nu},\bar t)$ with $w(\bar{s},\bar q,\bar{\nu},\bar t)=\psi(\bar{s},\bar q,\bar{\nu},\bar t)$.
 
\item $w:\bar{\Omega} \longrightarrow \R$ is a {\it viscosity solution} to \eqref{hjbTinf},  if it is both a viscosity sub- and supersolution.
\end{enumerate}
\end{definition}

From \cite[Theorem 4.3.1]{pham2009} we know that the value function $V$ is a viscosity solution of \eqref{hjbTinf}.
The standard argument for proving uniqueness is to conclude from the fact that two viscosity solutions are equal on the boundary of the domain, that they are also equal in the interior.
However, we neither have boundary conditions in the direction of $q$, nor in the direction of $\nu$.
Since $Q$ can never leave $[\underline q,\overline q]$, if started inside, we can as well define $Q$ on the whole $\R$.
(In practice we will always start with a value inside the capacity bounds of the energy storage facility.)
For this reason and since $Q$ is decoupled from $S,\pi$, we do not require boundary conditions in the direction of $q$, cf. \cite[p.~429 and p.~431]{ware2013}.

Further on, from \cite[Corollary 2.2]{chigansky2007} we know that the HMM filter reaches the boundary $\partial \bar \cS$ with probability zero in finite time. Hence we can argue in terms of \cite[Theorem II.2 and Theorem II.3 together with Remark II.4]{lions1983b} to get uniqueness of the viscosity solution to \eqref{hjbTinf}.

It remains to prove that if \eqref{hjbTinf} has a smooth solution, then this is the optimal value function.

\begin{proposition}\label{prop:verification}
Let $v$ be a viscosity supersolution of \eqref{hjbTinf}, and $v \in C^2$ almost everywhere. Then $V \le v$.\\
Furthermore, if there is a strategy $\tilde{u} \in \cU$ such that $J(\tilde{u})$ is a viscosity supersolution with $J(\tilde{u}) \in C^2$ almost everywhere, then $J(\tilde{u})=V$, and $\tilde{u}$ is the optimal charging policy.
\end{proposition}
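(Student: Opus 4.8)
The statement is a verification theorem, and the plan is to prove the domination bound $V \le v$ by an It\^o argument that turns the viscosity supersolution inequality into a pointwise comparison of rewards, and then to obtain the second assertion by sandwiching $J(\tilde u)$ between $V$ and itself. First I would reduce the viscosity notion to a classical one: on the full-measure open set where $v$ is $C^2$, the classical supersolution inequality
\[
v_t + (\mathcal{L}-\rho)v + \sup_{u\in[\underline u(q),\overline u(q)]}\bigl(u\,v_q + F\bigr) \le 0
\]
holds pointwise, since at a point of twice differentiability the classical derivatives belong to the second-order superjet and may be used in Definition \ref{def:viscosity}(2). From the terminal part of \eqref{hjbTinf} I also record $v(\cdot,\cdot,\cdot,T)\ge\Phi$.

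Next I would fix a starting point $(s,q,\nu,t)$ and an arbitrary $u\in\cU$, let $(S_r,Q_r,\pi_r)$ solve \eqref{eq:system} under $u$, and apply It\^o's formula to $r\mapsto e^{-\rho(r-t)}v(S_r,Q_r,\pi_r,r)$ on $[t,T]$. The drift collects to $e^{-\rho(r-t)}\bigl(v_t+(\mathcal{L}-\rho)v + u_r v_q\bigr)$ — note that $Q$ contributes only the first-order term $u_r v_q$, consistently with the zero diffusion of $Q$ and the definition of $\mathcal{L}$. By the pointwise supersolution inequality this drift is at most $-e^{-\rho(r-t)}F(S_r,Q_r,u_r)$. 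Taking expectations, the stochastic integral vanishes as a genuine martingale (using a linear growth bound on $v$ analogous to that of the value function together with the finite moments of $(S,\pi)$ on $[t,T]$), and rearranging gives
\[
v(s,q,\nu,t)\ge \E_{s,q,\nu,t}\Bigl[\int_t^T e^{-\rho(r-t)}F(S_r,Q_r,u_r)\,dr + e^{-\rho(T-t)}v(S_T,Q_T,\pi_T,T)\Bigr]\ge J(s,q,\nu,t;u),
\]
the last step using $v(\cdot,T)\ge\Phi$. Taking the supremum over $u\in\cU$ yields $v\ge V$, the first assertion.

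The hard part is that $v$ is only $C^2$ almost everywhere, so It\^o's formula does not apply verbatim. I expect the non-smoothness of $v$ to concentrate on the switching surfaces $\{s=\overline s\}\cup\{s=\underline s\}$ from \eqref{eq:sws}, which are of codimension one in the $s$-direction. Because the $S$-component of \eqref{eq:system} is a one-dimensional diffusion with nondegenerate coefficient $\sigma>0$, it admits a local time and its occupation measure is absolutely continuous in the space variable; hence $\int_t^T \mathbf{1}_N(S_r,Q_r,\pi_r,r)\,dr=0$ almost surely, where $N$ is the Lebesgue-null set on which $v$ fails to be $C^2$. The drift integral is therefore insensitive to $N$, and the conclusion follows either from a generalized It\^o--Tanaka formula or by mollifying $v$, applying the classical formula to the mollifications, and passing to the limit by dominated convergence using this occupation-time identity. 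Making this insensitivity rigorous — i.e.\ that the degenerate system (with $Q$ purely drift-driven and $S,\pi$ sharing a single noise) never lingers on the non-$C^2$ set — is the delicate point, and is where the nondegeneracy of the $S$-diffusion must be exploited carefully.

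Finally, for the second assertion, suppose $\tilde u\in\cU$ is such that $J(\tilde u)$ is a viscosity supersolution that is $C^2$ almost everywhere. Applying the first part with $v=J(\tilde u)$ gives $V\le J(\tilde u)$, while $J(\tilde u)\le V$ is immediate from the definition \eqref{wertfunktion2inf} of $V$ as the supremum over $\cU$. Hence $J(\tilde u)=V$, the supremum is attained at $\tilde u$, and $\tilde u$ is the optimal charging policy.
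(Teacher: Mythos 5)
Your overall architecture is right, and the fallback you mention in passing --- mollify $v$, apply It\^o's formula to the mollifications, and pass to the limit --- is exactly what the paper does. But your primary route, applying It\^o directly to $v$ and arguing that the controlled process spends zero time on the Lebesgue-null set $N$ where $v$ fails to be $C^2$, has a genuine gap for this particular system. The state process \eqref{eq:system} is driven by a single one-dimensional Brownian motion $B$: the price $S$ and all components of the filter $\pi$ share the same noise, and $Q$ carries no noise at all. Consequently the law of $(S_r,Q_r,\pi_r)$ at a fixed time is in general singular with respect to Lebesgue measure on the state space, so ``$N$ is Lebesgue-null'' does not imply $\P\bigl((S_r,Q_r,\pi_r,r)\in N\bigr)=0$, and the occupation-time identity $\int_t^T \1_N(S_r,Q_r,\pi_r,r)\,dr=0$ can fail. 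Nondegeneracy of the $S$-coordinate alone only helps if the $s$-sections of $N$ are null along the random trajectory of $(Q_r,\pi_r,r)$; a product-null set only has a.e.\ null sections, and the degenerate law of $(Q_r,\pi_r)$ may charge the exceptional sections. Even if one knew $N$ were contained in switching surfaces $\{s=b(q,\nu,t)\}$, the argument reduces to showing that the semimartingale $S_r-b(Q_r,\pi_r,r)$ spends zero time at the origin, which requires precisely the non-parallelity condition of Assumption \ref{ass:coefficients} (c3) from Section \ref{sec:Existence} --- an assumption neither made in Proposition \ref{prop:verification} nor available for a general supersolution $v$, whose singular set need not be a switching surface of any control.

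The paper's proof sidesteps this entirely by mollifying first: since the classical supersolution inequality holds a.e.\ (your reduction via superjets at points of twice differentiability is the same step the paper takes implicitly) and mollification of a nonpositive function is nonpositive, the smooth function $\varphi^n$ satisfies $-\rho\varphi^n+\varphi^n_t+\cL\varphi^n+F+u\,\varphi^n_q\le\varepsilon$ at \emph{every} point for $n$ large, so the drift bound after It\^o's formula needs no information about where the process lives. One then takes expectations, lets $n\to\infty$ (only pointwise convergence $\varphi^n\to v$ is needed for the initial and terminal terms, by dominated convergence) and finally $\varepsilon\to0$. Your martingale and linear-growth considerations, the use of $v(\cdot,T)\ge\Phi$, and the sandwich $V\le J(\tilde u)\le V$ for the second assertion all match the paper; the substantive correction is that for this degenerate system the mollification route is not an optional alternative but the necessary one.
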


 \begin{proof}
Let $\varphi(s,q,\nu,t) := \frac{1}{\pi^\frac{D+2}{2}} e^{-\left( s^2 + q^2 + \sum_{i=1}^{D-1} \nu_i^2 +r^2\right) }$, where $\pi$ is the area of a circle with radius $1$ and
\[
 \varphi^n(s,q,\nu,t) := n^{D+2} \int_{-\infty}^\infty \dots \int_{- \infty}^\infty v(s-r_1,q-r_2,(\nu_1-r_3,\dots \nu_{D-1}-r_{D+1}),t-r_{D+2}) \varphi(nr)\, dr_1\, \dots \, dr_{D+2}\,,
\]
where $nr=(nr_1,\dots,nr_{D+2})$.
For $n \to \infty$, $\varphi^n \to v$ and $\cL \varphi^n \to \cL v$, see \cite{wheeden1977}.\\

Let $u=(u_t)_{t \ge 0}$ be an admissible strategy. Then
 \begin{align*}
e^{-\rho (T -t)} \varphi^n (S_T,Q_T,\pi_T,T)
=& \varphi^n(s,q,\nu,t) + \int_t^{T}  e^{-\rho (r-t)} \left[ -\rho \varphi^n(S_r,Q_r,\pi_r,r) + \varphi^n_r(S_r,Q_r,\pi_r,r) \right.\\ &\left.+\cL \varphi^n (S_r,Q_r,\pi_r,r) + u_r \varphi_x^n (S_r,Q_r,\pi_r,r) \right]\,dr
+ M_t\,,
\end{align*}

where $(M_t)_{t\ge 0}$ is a martingale.
Since $v$ is a viscosity supersolution which is of class $C^2$ a.e., it satisfies
\[
 -\rho v + v_t+\cL v + F(u)+u\, v_q \le 0 \quad \mbox{a.e.}\,,
\]
which allows us to choose $n$ sufficiently large such that
\[
 -\rho \varphi^n+\varphi^n_t + \cL \varphi^n + F(u)+u\, \varphi^n_q \le \varepsilon\,.
\]
Taking expectations we get
\begin{align*}
 &\E_{s,q,\nu,t} \left(e^{-\rho (T -t)} \varphi^n (S_T,Q_T,\pi_T,T)\right)
= \varphi^n(s,q,\nu,t) + \E_{s,q,\nu,t} \left(\int_t^{T}  e^{-\rho (r-t)} \left[ -\rho \varphi^n(S_r,Q_r,\pi_r,r)\right.\right.\\& \left.\left. +\varphi^n_r(S_r,Q_r,\pi_r,r)+ \rho \varphi^n(S_r,Q_r,\pi_r,r) - \varphi^n_r(S_r,Q_r,\pi_r,r)-F(S_r,Q_r,u_r)-u_r \varphi^n_q + \varepsilon + u_r \varphi_x^n (S_r,Q_r,\pi_r,r) \right]\right)\,.
\end{align*}

As $n\longrightarrow\infty$ we obtain
\begin{align*}
\E_{s,q,\nu,t} \left( e^{-\rho (T-t)} v (S_T,Q_T,\pi_T,T)\right)
\le v(s,q,\nu,t) - \E_{s,q,\nu,t} \left( \int_t^{T}  e^{-\rho (r-t)} F(S_r,Q_r,u_r) \,dr
\right)
\end{align*}
by dominated convergence and as $\varepsilon>0$ was arbitrary.
Thus,
\begin{align*}
&\E_{s,q,\nu,t} \left( e^{-\rho (T-t)} v (S_T,Q_T,\pi_T,T)\right)+\E_{s,q,\nu,t} \left( \int_t^{T}  e^{-\rho (r-t)} F(S_r,Q_r,u_r) \,dr\right)\\
&=\E_{s,q,\nu,t} \left( \int_t^{T}  e^{-\rho (r-t)} F(S_r,Q_r,u_r)\,dr+e^{-\rho (T-t)} \Phi (S_T,Q_T) \right)
=J(s,q,\nu,t;u)
\le v(s,q,\nu,t)\,.
\end{align*}
By taking the supremum over all $u \in [\underline u(q),\overline u(q)]$ in the derivation we get
$V(s,q,\nu,t) \le v(s,q,\nu,t)$.
\end{proof}


\section{Numerical study}
\label{sec:Num}

In this section we solve the HJB equation \eqref{hjbTinf} numerically.

The focus of this paper lies on unobservable factors influencing the energy price.
As seasonalities are clearly observable, these effects can be identified and we
may assume that energy prices in our model are already made free from them.
Hence, for the numerical part -- in order to be able to identify the effects from unobservable factors -- we set $K(t)\equiv 0$.

We apply the semi-Lagrangian approach with fully implicit timestepping, see \cite[Section 3]{chen2007}, together with a finite difference method, see \cite[Chapter 3, Section 5.3]{samarskii2001}, to obtain approximations to the value function as well as the optimal charging policy.
The method converges to the viscosity solution of the HJB equation, since the conditions from \cite[Section 4]{chen2007} are satisfied.

Our numerical study is conducted for a Markov chain with two states, i.e., $D=2$ and for the following parameter set:
$\mu_1=50$, $\mu_2=30$, $\kappa=15$, $\sigma=50$, $\lambda_{11}=\lambda_{22}=-0.5$, $\rho=0.05$, $T=1$ year, $c_0=0$, $d_{+}=d_{-}=10$, $c_S=0.95$, $\underline q=0$, $\overline q=100$, and $M_u=730$.\\

Figure \ref{fig:value} shows the resulting value function and the charging policy for $t=0$, and $\nu_1=0.5$.
The optimal control is of threshold type with two threshold levels. For low energy prices, energy is bought at the maximum rate (red area). Then, if the energy price exceeds the lower threshold level, which is at a price of $s\approx25$ (depending on $q$), the optimal policy is to wait (green area). If the energy price further also exceeds the higher threshold level at $s\approx45$, then energy is sold (blue area).
For the value function we observe that $V$ grows in $q$. So the more energy is inside the storage facility, the more the storage facility is worth.
Furthermore, $V$ grows in $s$ for intermediate and high energy prices, but falls in $s$ for low energy prices, if $q$ is small, too. This is directly related to the resulting control policy. For low energy prices, energy is bought, and hence an increasing price while still in the area of buying decreases the value.

Furthermore, note that on extending the figure to negative values of $s$, the behaviour proceeds in the same way as for small values of $s$.\\

\begin{figure}[h]
\begin{center}
\includegraphics[scale=0.5]{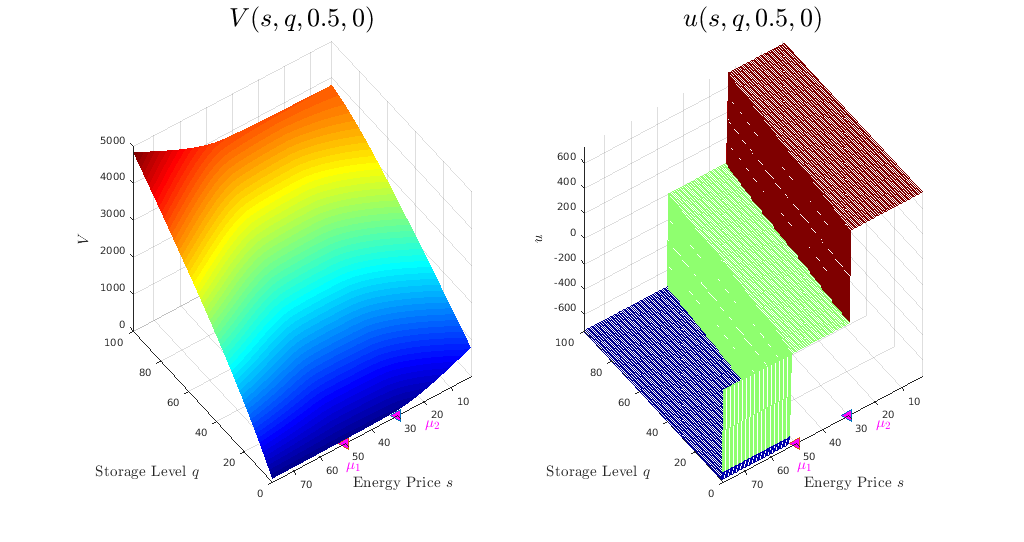}
\end{center}
\caption{The resulting value function (left) and charging policy (right).}\label{fig:value}
\end{figure}

Figure \ref{fig:entwicklung} shows the value function and the associated control for $t=0$ and for three different filter values, $\nu_1 \in \{0,0.5,1\}$. We observe that the higher $\nu_1$, the higher the value function. This has the following reason.
Since $\mu_1>\mu_2$, it means that if the probability for being in the state with the higher mean reversion level of the price process grows, then also the value of the energy storage facility grows.
For the optimal control higher $\nu_1$ implies that the lower threshold level, i.e., the one that separates the area of buying energy and waiting (between the red and the green areas) and also the higher threshold level, between the area of waiting and the area of selling energy (between the green and the blue areas) are shifted upwards, and hence in sum the area of buying (red area) grows and the area of selling (blue area) shrinks. This means for growing $\nu_1$ the storage manager will buy more energy since she expects a growing energy price. Then she could sell at higher prices later.\\

\begin{figure}[h]
\begin{center}
\includegraphics[scale=0.25]{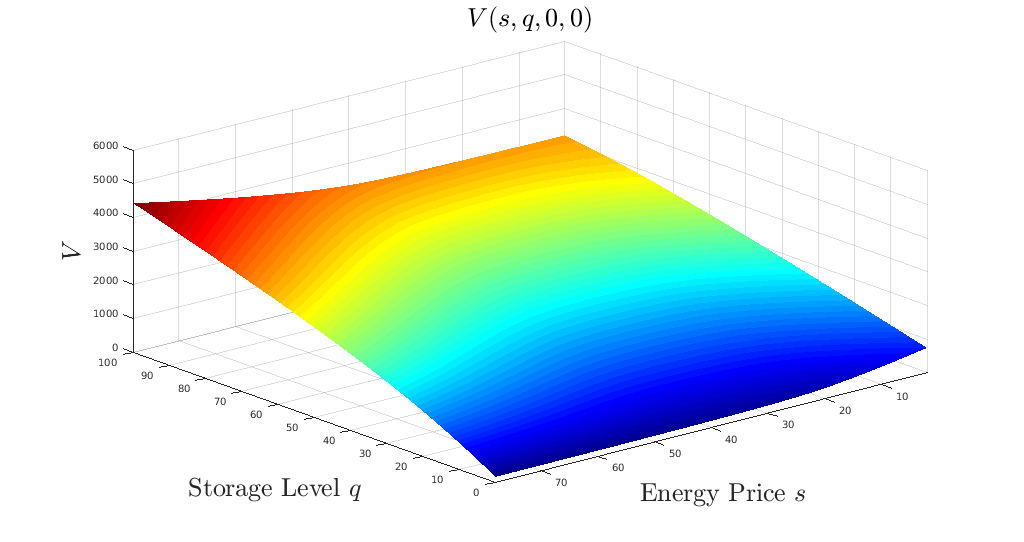} \includegraphics[scale=0.25]{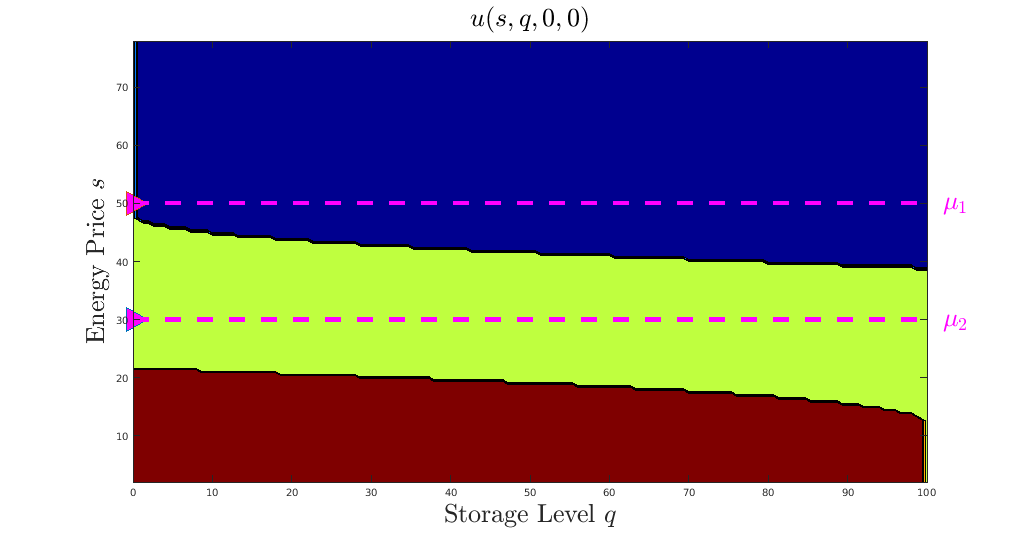}
\includegraphics[scale=0.25]{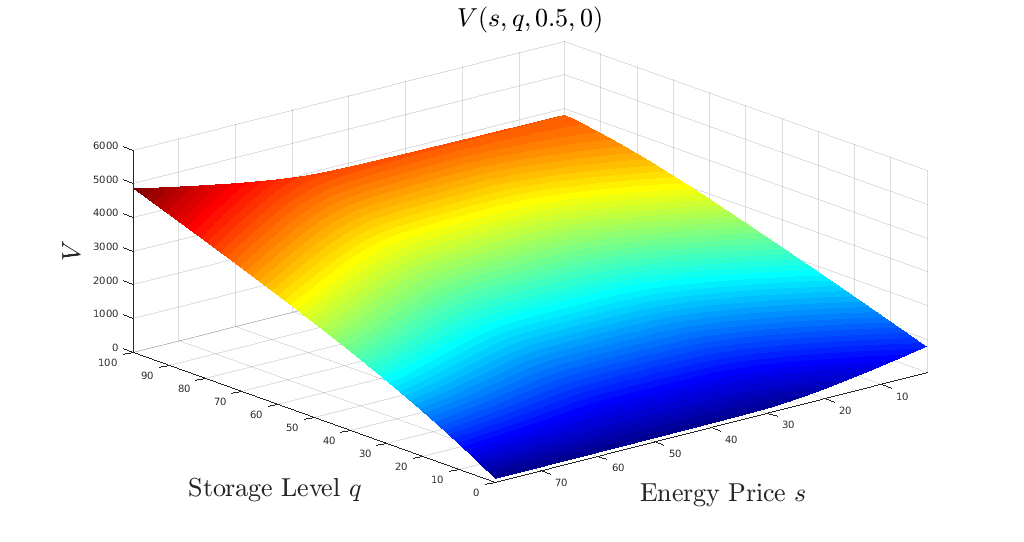} \includegraphics[scale=0.25]{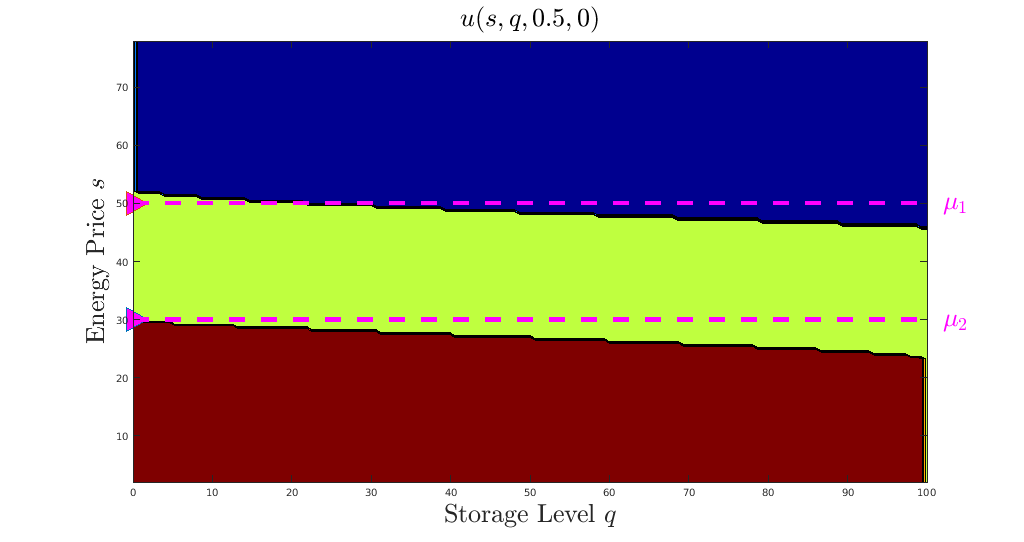}
\includegraphics[scale=0.25]{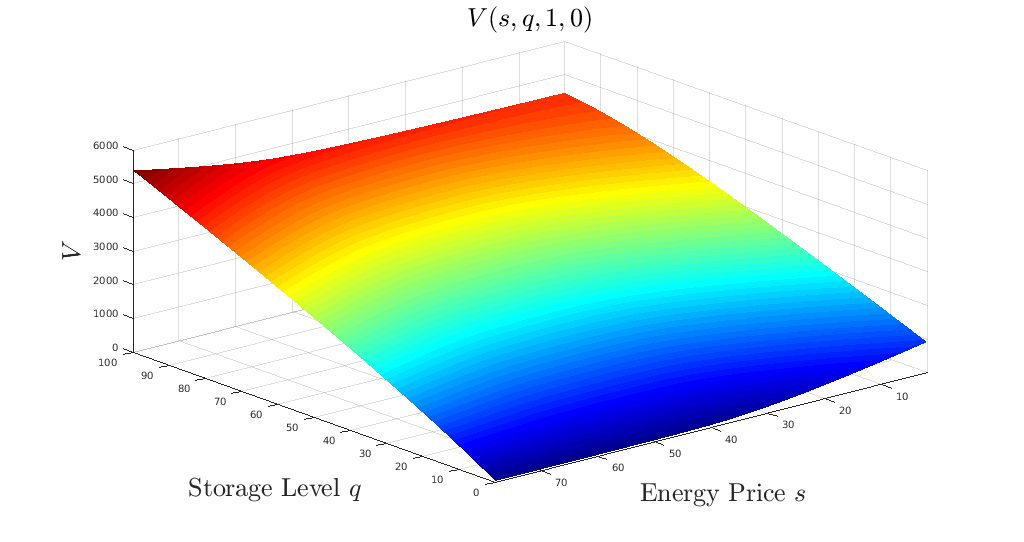} \includegraphics[scale=0.25]{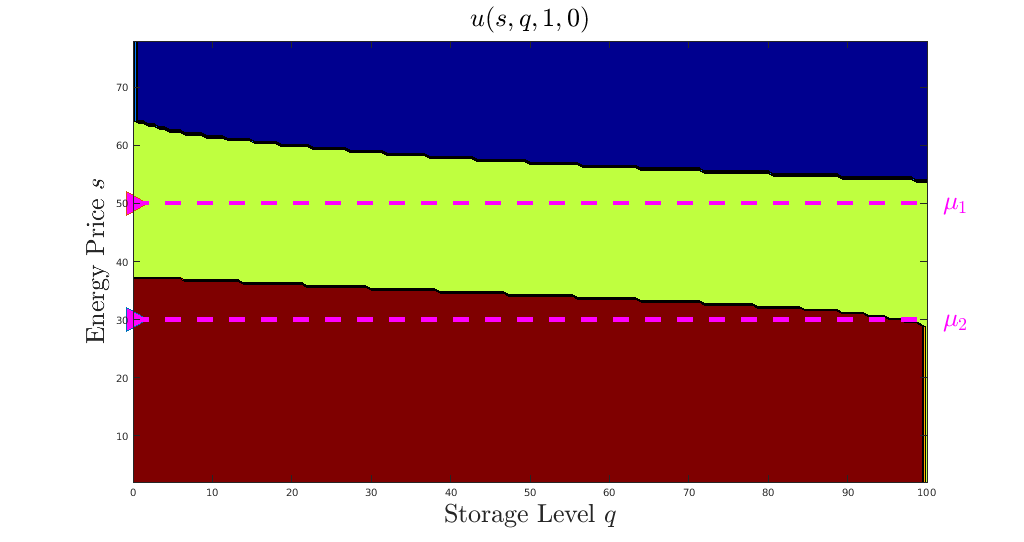}
\caption{The value functions (left) and charging policies (right) for growing values of $\nu_1$ (from up to down).}
\label{fig:entwicklung}
\end{center}
\end{figure}

In Figure \ref{fig:SPiQfix}, where we set $t=0$, and fix $q=50$, this effect becomes even more clear. While the value function grows gently in $\nu_1$, the levels between the buy- and wait region, and the wait- and sell region grow in $\nu_1$. So for higher $\nu_1$ the storage manager will buy more energy, since she expects to be able to sell it at higher prices later.

Note that growing $\nu_1$ does not necessarily imply a growing value in all parameter sets.
For example, consider the case of $\mu_1\gg \mu_2$. If the storage level is very low and the energy price is high (around $\mu_1$), then small $\nu_1$ implies falling prices such that in the future one can buy cheaper, whereas growing $\nu_1$ means that prices will probably stay high, making charging expensive and hence leading to a falling value. This effect is illustrated in Figure \ref{fig:fallval}, where we fix $s=\mu_1$ and compare the value function for $q=0$ (red) and $q=100$ (blue) in dependence of $\nu_1$.
Already for our parameter set, $V$ falls moderately in $\nu_1$ for $q=0$.\\

\begin{figure}[h]
\begin{center}
\includegraphics[scale=0.25]{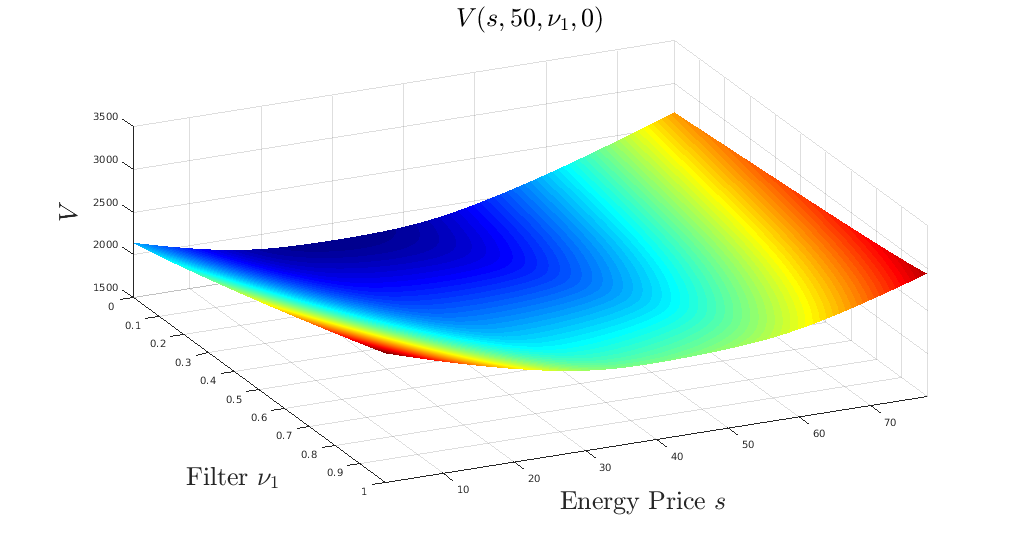}
\includegraphics[scale=0.25]{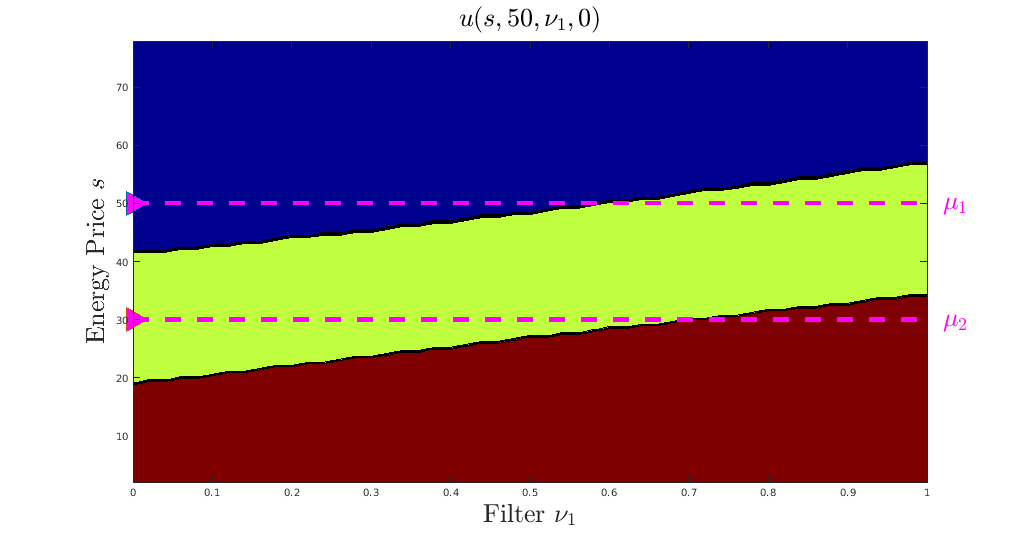}
\end{center}
\caption{The dependence of the value function (left) and the charging policy (right) on the filter.}
\label{fig:SPiQfix}
\end{figure}

\begin{figure}[h]
\begin{center}
\includegraphics[scale=0.25]{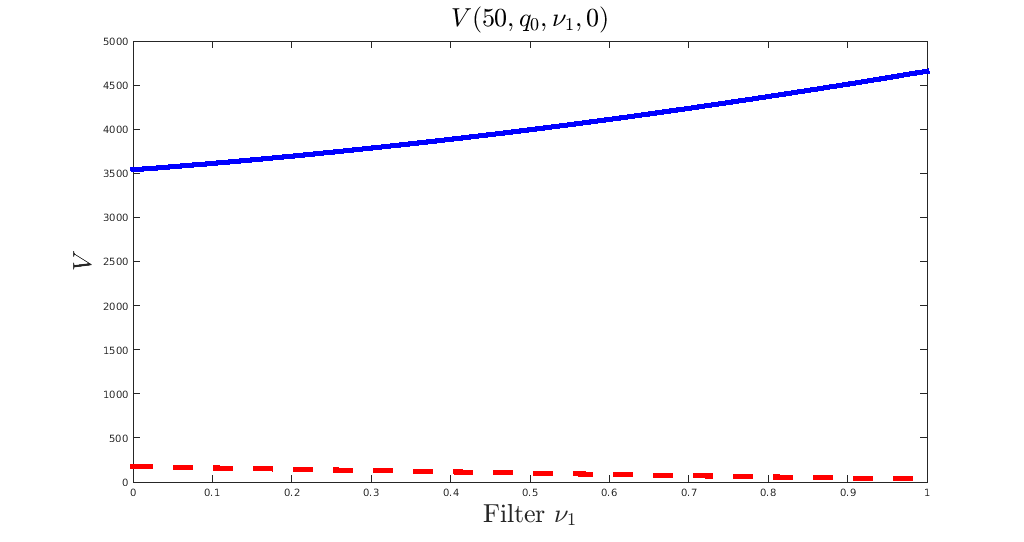}
\caption{The value function for $q_0=0$ (dashed red) and $q_0=100$ (blue).}
\label{fig:fallval}
\end{center}
\end{figure}

In Figure \ref{fig:TSQfix} we fix $q=50$, and $\nu_1=0.5$. We observe that the value function falls in $t$ since as $t$ grows, there is less time left to control the system and hence to earn money. Furthermore, at the end of the planning horizon, there is a multiplicative scrap rate $c_S$, which amplifies this effect. For the charging policy growing $t$ means that the levels between buying and waiting, and waiting and selling sink to avoid the penalty.\\

\begin{figure}[h]
\begin{center}
\includegraphics[scale=0.25]{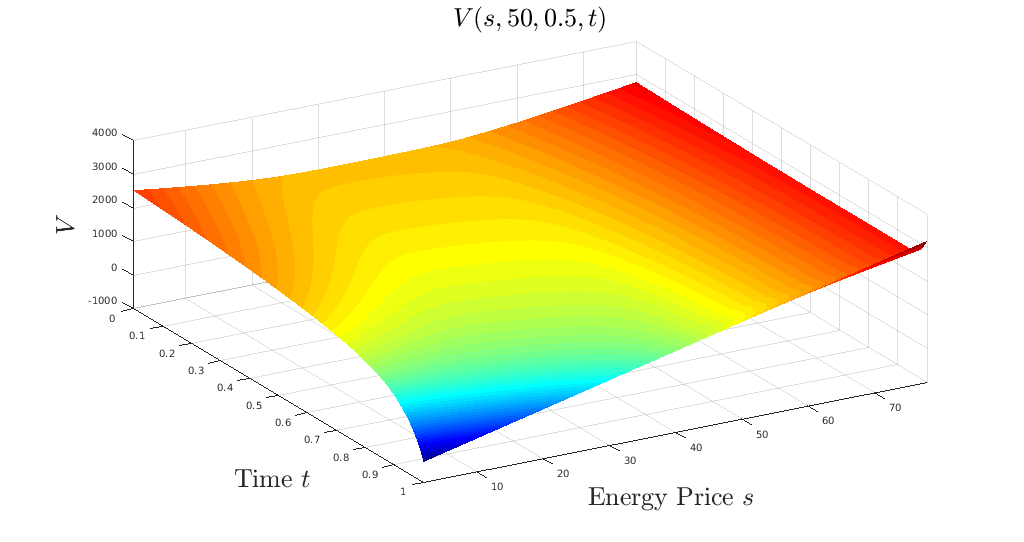}
\includegraphics[scale=0.25]{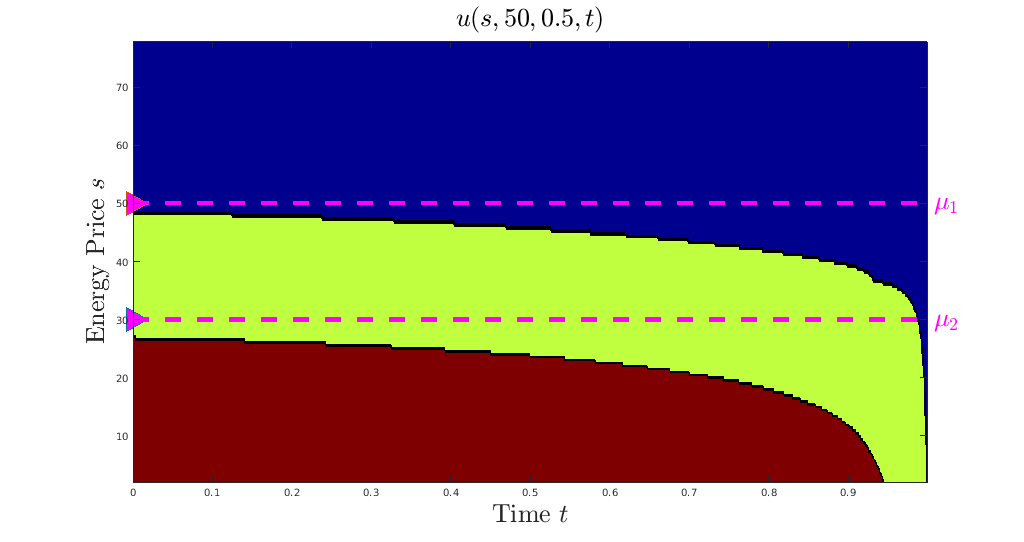}
\end{center}
\caption{The dependence of the value function (left) and the charging policy (right) on time.}
\label{fig:TSQfix}
\end{figure}

Finally, we change the planning horizon in our parameter set to $T=3$ and compare it to the case where $T=1$ for $t=0$, and $\nu_1=0.5$, see Figure \ref{fig:Stationar}. For longer planning horizon the value function is especially higher for small prices, because a longer time horizon increases the chance that prices grow. The effect becomes more clear on regarding the charging policy.
The lower threshold level is higher for $T=3$, so more energy will be stored to sell it at higher prices later.
Additionally, for high values of $q$ and growing planning horizon the higher threshold level is higher and hence the energy storage manager will start selling at higher prices.

For $T\to\infty$ value function and charging policy will converge to the stationary case, in which the value is autonomous.\\

\begin{figure}[h]
\begin{center}
\includegraphics[scale=0.25]{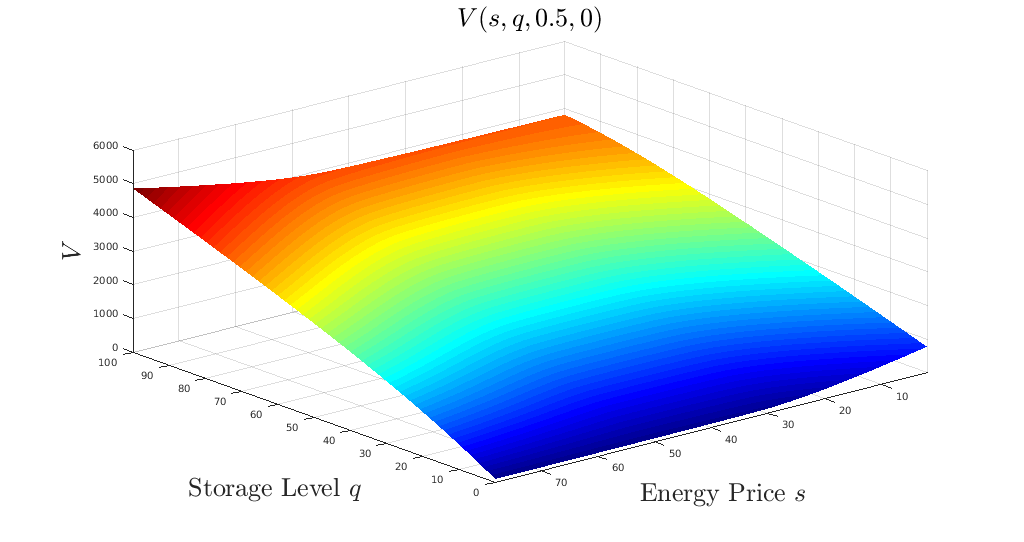} \includegraphics[scale=0.25]{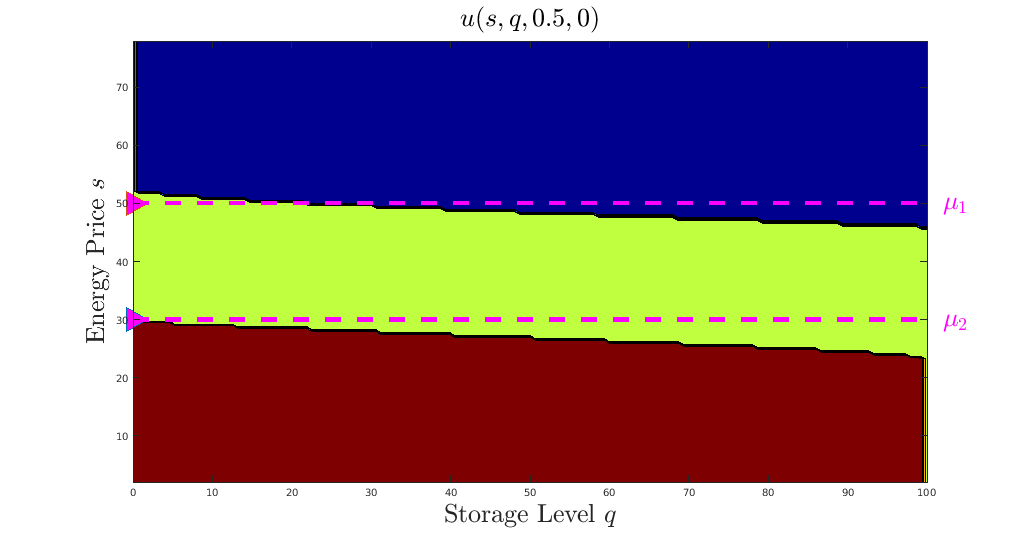}
\includegraphics[scale=0.25]{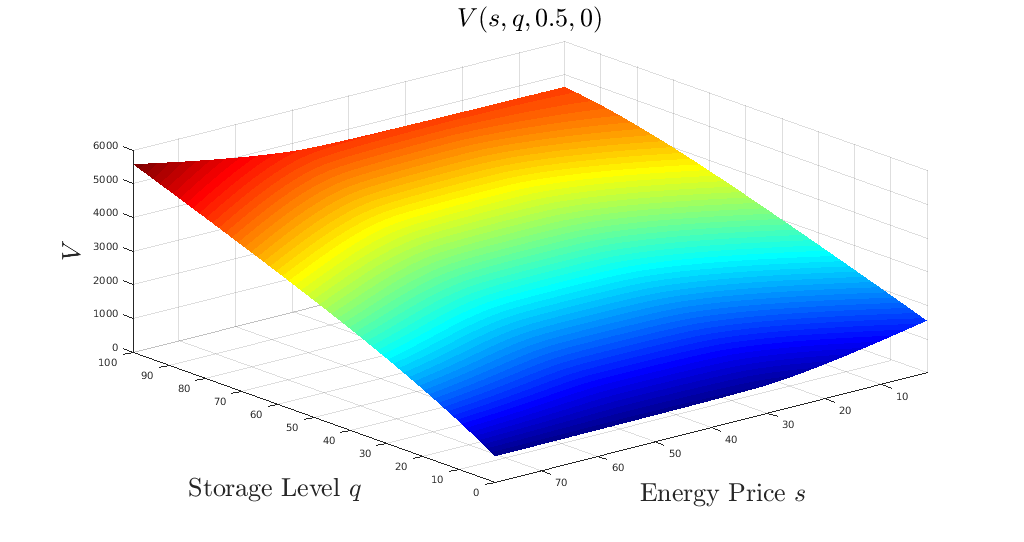} \includegraphics[scale=0.25]{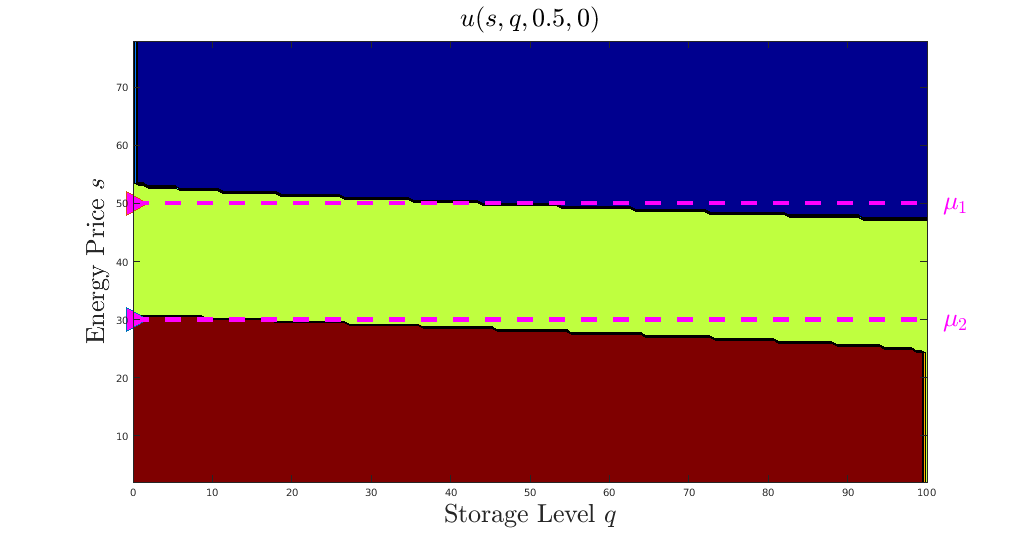}  
\caption{The value functions (left) and charging policies (right) for growing planning horizon ($T=1$ above, $T=3$ below).}
\label{fig:Stationar}
\end{center}
\end{figure}

In all our examples we observe that the candidate for the optimal control attains three states, namely the state of selling energy, the state of waiting, and the state of buying energy.
The functions $\underline s, \overline s$ from \eqref{eq:sws} describe the critical energy prices at which the optimal rate of charging switches between these states.
Hence, the resulting strategy is of threshold type.

\begin{remark}
 In \cite[Section 7]{sz12} it is shown that the value function is the unique viscosity solution of their HJB equation with fixed threshold strategy, if a condition implying concavity in a weak sense is satisfied.
A related result can also be obtained in our case.
Then, threshold strategies are optimal, if they lead to a smooth reward function, see Proposition \ref{prop:verification}.
\end{remark}

It remains to check whether threshold type strategies are admissible.
The question of admissibility of the resulting strategies reduces to the question of whether the underlying processes exist, or equivalently, whether the system of SDEs which describes the underlying processes has a solution.
It turns out that this issue is not straightforward.
Remember the underlying system

\begin{equation}
\begin{alignedat}{3}\label{eq:underlyingsystem}
dS_t&=\kappa \left(\sum\limits_{i=1}^D \mu_i \pi_t^i+K(t)-S_t \right ) \, dt +\sigma dB_t \, , &S_0&=s_0  \, ,\\
dQ_t&=\left(\underline u(Q_t) \1_{\{S_t \ge \underline s (S_t, Q_t,\pi_t,t)\}} + \overline u(Q_t) \1_{\{S_t \le \overline s (S_t,Q_t,\pi_t,t)\}}\right) dt \, ,\quad &Q_0&=q_0 \, , \\
d\pi_t&=\Lambda^{\top} \pi_t\, dt+ \sigma^{-1}\Bigl(\mathrm{diag}(\pi_t) a_t-\widehat{a}_t \pi_t \Bigr) d B_t \, ,  &\pi_0&=\nu_0\,.
\end{alignedat}
\end{equation}

The drift of system \eqref{eq:underlyingsystem} is discontinuous, hence we cannot apply the classical theory as mentioned in Section \ref{sec:Introduction}. Therefore, we study this issue in detail. In Section \ref{sec:Existence} we consider a more general setup.

\section{Existence and uniqueness result}
\label{sec:Existence}

We consider the following $d$-dimensional time-inhomogeneous SDE:

\begin{align}\label{eq:SDE1}
 d X_t=\alpha(X_t,t) dt + \beta(X_t,t) d W_t\,, \quad  X_0= x\,,
\end{align}
where $\alpha: \domain \longrightarrow \R^{d}$, $\beta: \domain\longrightarrow \R^{d \times d}$, and $W=(W_t)_{t\ge0}$ is a $d$-dimensional standard Brownian motion.\\

It is well-known that if $\alpha$ and $\beta$ are (locally) Lipschitz in $X_t$ and satisfy linear growth, then there exists a \uss{}.
For $\alpha$ only bounded and measurable and $\beta$ bounded, Lipschitz, and uniformly non-degenerate, i.e., 
there exists a constant $\lambda>0$ such that for all $x\in \R^d$ and all $v\in \R^d$ we
have $v^\top\beta(x)\beta(x)^\top v\ge \lambda v^\top v$,
\citet{zvonkin1974} and \citet{veretennikov1981} prove that there still exists a \uss{}.
A generalization is \citet{veretennikov1984}, where $\beta$ only needs to be non-degenerate in those components in which the drift is not Lipschitz.
Another generalization is \citet{zhang2005}, where only a locally integrable drift is required.
A first idea to handle a discontinuous drift would be to approximate $\alpha$ by smooth functions and analyze the approximation limit, see, e.g., \citet{krylov2002}.
A further existence and uniqueness result for the situation where the drift is only measurable is  proven in
\citet{meyer2010} with techniques from Malliavin calculus.
However, all of these results heavily depend on a non-degenerate diffusion coefficient.
For the degenerate situation \citet{halidias2006} prove an existence and uniqueness result for an SDE with an increasing drift coefficient by approximation with sub- and supersolutions.

A more recent result for the degenerate situation is \citet{sz14}, where the condition that the diffusion coefficient needs to be uniformly non-degenerate is replaced by a condition on the geometrical relation between the diffusion coefficient and the hypersurface along which the drift is discontinuous.
Their proof extends the idea from \citet{zvonkin1974}, however, it is neither required that the coefficients are bounded, nor that the drift is increasing.
\citet{sz14} consider a time-homogeneous SDE.
We extend the result from \citet{sz14} to the time-inhomogeneous case, in which conditions on the coefficients can be relaxed.
\citet{sz15} generalize the result from \citet{sz14} for the one-dimensional setting.
The most general result for time-homogeneous SDEs with discontinuous drift and possibly degenerate diffusion coefficient is \citet{sz2016b}, which is a generalization of \citet{sz15} to multidimensional SDEs.\\

Proving that the $d$-dimensional SDE \eqref{eq:SDE1} has a \uss{} is equivalent to proving that
\begin{align}\label{eq:SDE2}
 d\overline X_t= \overline \alpha(\overline X_t) dt +  \overline \beta(\overline X_t) dW_t\,, \quad \overline X_0=(x,0)\,,
\end{align}
has a \uss,
where $\overline \alpha: \domain \longrightarrow \R^{d+1}$, with $\overline \alpha_i = \alpha_i$ for $i=1,\ldots,d$, and $\overline \alpha_{d+1}\equiv1$, $\overline \beta: \domain \longrightarrow \R^{(d+1) \times d}$, with $\overline \beta_{ij} = \beta_{ij}$ for $i,j=1,\ldots,d$, and $\overline \beta_{(d+1),j}\equiv0$ for $j=1,\ldots,d$.\\
It seems a bit artificial to study system \eqref{eq:SDE2} instead of \eqref{eq:SDE1}, but we would like to apply results from \cite{sz14} and hence have to fit our problem into the setup therein.\\

\begin{assumption}\label{ass:f}
Consider a function $f:\domain\longrightarrow\R$ satisfying
 \begin{enumerate}
  \item[(f1)] $f \in C^{3,5,3}(\domnew)$;
  \item[(f2)] $\vert \frac{\partial f}{\partial x_1}\vert > 0$.
 \end{enumerate}
\end{assumption}

Under these assumptions there exists $e \in C^{3,5,3}(\domnew)$ such that
\begin{align*}
e(f(x,t),x_2,\ldots,x_d,t)&=x_1\,,\\
f(e(u,x_2,\ldots,x_d,t),x_2,\ldots,x_d,t)&=u\,.
\end{align*}

Later, we will allow $\alpha$ to be discontinuous along the hypersurface $\{(x,t) \in \domain:f(x,t)=0\}$.
For this, we define
\begin{equation}\label{eq:hatcoeff}
\begin{aligned}
\hat{\alpha}_1(u,x_2,\ldots,x_d,t) &=\frac{\partial f}{\partial t}+ \sum_{i=1}^d \alpha_i(u,x_2,\ldots,x_d,t) \frac{\partial f}{\partial x_i}+\frac{1}{2}\sum_{i,j=1}^d (\beta \beta^\top)_{ij} \frac{\partial^2 f}{\partial x_i \partial x_j}\,,\\
\hat{\alpha}_i(u,x_2,\ldots,x_d,t) &= \overline \alpha_i(u,x_2,\ldots,x_d,t)\,, \qquad i=2,\ldots,d+1\,,\\
\hat{\beta}_{1j} (u, x_2,\ldots,x_d,t) &= \sum_{i=1}^d  \beta_{ij} \frac{\partial f}{\partial x_i}\,, \qquad j=1,\ldots,d\,,\\
\hat{\beta}_{ij} (u, x_2,\ldots,x_d,t) &= \overline \beta_{ij}\,, \qquad i=2,\ldots,d+1\,;\,\,j=1,\ldots,d\,,
\end{aligned}
\end{equation}
and
\begin{align}\label{eq:SDE3}
d\hat X_t=\hat \alpha(\hat X_t)dt + \hat \beta(\hat X_t) dW_t\,, \qquad \hat X_0=(f(x,t),x_2,\ldots,x_d,t)\,,
\end{align}
where all missing arguments are $(e(u,x_2,\ldots,x_d,t),x_2,\ldots,x_d,t)$.\\

We make the following assumptions:
\begin{assumption}\label{ass:coefficients}
Let
 \begin{enumerate}
  \item[(c1)] $\beta \in C^{1,3,2}(\domnew)$;
  \item[(c2)] $\beta$ is such that $\hat \beta$ is Lipschitz;
  \item[(c3)] $\left\| \nabla f(x,t) \cdot \overline \beta(x,t)  \right\|^2\ge c > 0$ for some constant $c$ and for all $(x,t) \in \domain$;
  \item[(c4)] $\alpha$ is such that there exist functions $\alpha_+, \alpha_-\in C^{1,3,2}(\domnew)$ such that
\[
\alpha(u,x_2,\ldots,t)=\left\{
\begin{array}{ccc}
\alpha_+(x_1,x_2,\ldots,t)&\text{ if }&f(x,t)>0\\
\alpha_-(x_1,x_2,\ldots,t)&\text{ if }&f(x,t)<0;\\
\end{array}
\right.
\]
   \item[(c5)] $\alpha,\beta$ are such that $\hat\alpha,\hat\beta$ satisfy linear growth, i.e., there exist constants $c_1,c_2\ge0$ such that $\|\hat\alpha(x,t)\|+\|\hat\beta(x,t)\| \le c_1+c_2 \|x\|$.
 \end{enumerate}
\end{assumption}

\begin{remark}\label{rem:asscoeff}
The following conditions on $f$ and the coefficients $\alpha, \beta$ imply that items (c2) and (c5) in Assumptions \ref{ass:coefficients} hold. However, Assumptions \ref{ass:f} and \ref{ass:coefficients} contain less restrictive conditions on the coefficients \eqref{eq:hatcoeff}. 
 \begin{enumerate}
  \item[(f3)] $\frac{\partial f}{\partial x_i}$ is bounded and Lipschitz for $i=1,\ldots,d$, and there exist constants $c_1,c_2>0$ such that $\|\frac{\partial f}{\partial t}(x,t)\|\le c_1+c_2 \|x\|$;
  \item[(f4)] there exist constants $c_1,c_2>0$ such that $\|\frac{\partial f}{\partial x_i \partial x_j}(x,t)\|\le c_1+c_2 \|x\|$ for all $i,j=1,\ldots,d$;
  \item[(c2')] $\overline \beta$ is bounded and Lipschitz;
   \item[(c5')] $\overline \alpha$ satisfies linear growth, i.e., there exist constants $c_1,c_2\ge0$ such that $\|\alpha(x,t)\| \le c_1+c_2 \|x\|$.
 \end{enumerate}
\end{remark}

\begin{remark}
 Assumption \ref{ass:coefficients} (c3) is a crucial condition, which, in the case of a degenerate diffusion coefficient $\beta$, has a nice geometrical interpretation. It means that the diffusion must not be parallel to the hypersurface along which the drift is discontinuous. We refer to \cite{sz14} for more details.
\end{remark}

\begin{theorem}\label{thm:ex}
 Let Assumptions \ref{ass:f} and \ref{ass:coefficients} hold.
 Then \eqref{eq:SDE2} has a unique global strong solution.
\end{theorem}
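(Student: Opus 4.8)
The plan is to reduce the time-inhomogeneous, degenerate problem \eqref{eq:SDE1} to a time-homogeneous problem with a straightened discontinuity hypersurface, and then to invoke the existence and uniqueness machinery of \citet{sz14}. First I would use the equivalence between \eqref{eq:SDE1} and the augmented system \eqref{eq:SDE2}: appending time as an additional state component with unit drift and vanishing diffusion turns the time-inhomogeneous SDE into an autonomous one, at the price of a diffusion matrix $\bar\beta$ that is degenerate in the new coordinate. A strong solution of one system yields a strong solution of the other, and uniqueness is transported along this correspondence, so it suffices to treat \eqref{eq:SDE2}.

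Next I would perform the change of variables encoded by $f$. Applying It\^o's formula to $f(X_t,t)$ while keeping the remaining components of $X_t$ unchanged and appending the time coordinate shows that, if $\bar X$ solves \eqref{eq:SDE2}, then the transformed process $\hat X$ solves \eqref{eq:SDE3} with the coefficients \eqref{eq:hatcoeff}; indeed $\hat\alpha_1$ is exactly the It\^o drift of $f(X_t,t)$ and $\hat\beta_{1\cdot}$ its diffusion coefficient. By (f2) the map replacing the first coordinate by $f$ is invertible, with smooth inverse $e$ from Assumption \ref{ass:f}, so the correspondence is a diffeomorphism in the first coordinate and again preserves existence and uniqueness of strong solutions. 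Consequently the theorem reduces to showing that \eqref{eq:SDE3} has a \uss{}.

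The decisive observation is that in the new coordinates the discontinuity of the drift is confined to the hyperplane $\{z_1=0\}$, the image of the hypersurface $\{f=0\}$. By Assumption \ref{ass:coefficients}(c4) the drift agrees on each side with a smooth function, namely the composition of $\alpha_\pm$ with $f$ and $e$, so $\hat\alpha$ is piecewise smooth and locally Lipschitz away from $\{z_1=0\}$; by (c2) the diffusion $\hat\beta$ is Lipschitz, and by (c5) both satisfy linear growth. The key structural input is (c3): since $\|\hat\beta_{1\cdot}\|^2=\|\nabla f\cdot\bar\beta\|^2\ge c>0$, the noise is non-degenerate precisely in the coordinate direction transverse to the discontinuity, even though $\bar\beta$ itself is degenerate, for instance in the time direction. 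This is exactly the geometric configuration required by \citet{sz14}, so I would apply their transformation $\Psi$, constructed so that its second-order term compensates the jump of $\hat\alpha_1$ across $\{z_1=0\}$; the resulting process $\Psi(\hat X)$ then satisfies an SDE with globally Lipschitz coefficients of at most linear growth, which possesses a \uss{} by the classical theory. Undoing $\Psi$, then $e$, and finally dropping the time coordinate yields the \uss{} of \eqref{eq:SDE1}.

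I expect the main obstacle to lie in the transfer of the \citet{sz14} construction to the degenerate, time-inhomogeneous setting in this last step. One must verify that the removal map $\Psi$ and the Lipschitz estimates for $\Psi(\hat X)$ use non-degeneracy of the noise only in the single direction guaranteed by (c3), and that the unequal smoothness of $f$ in the $x_1$-, $x_2,\ldots,x_d$- and $t$-variables, reflected in the class $C^{3,5,3}(\domnew)$, still suffices for the transformed coefficients \eqref{eq:hatcoeff} to inherit the regularity demanded by the argument. Confirming that $\hat\alpha$ and $\hat\beta$ satisfy (c2) and (c5) by the chain rule through $f$ and $e$ is routine, but it has to be carried out carefully so that no hidden dependence on a degenerate direction is reintroduced.
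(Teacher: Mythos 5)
Your overall strategy is the same as the paper's: append time to make the system autonomous, straighten the discontinuity hypersurface via $f$ and its inverse $e$ so that the jump sits on $\{z_1=0\}$, remove the jump with a Zvonkin-type transformation whose second-order term compensates the drift discontinuity (using (c3) for non-degeneracy in the single transverse direction), solve the transformed equation, and undo the transformations. However, there are two concrete gaps in the final step.

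First, you assert that after applying the removal map $\Psi$ the coefficients are \emph{globally} Lipschitz, so that the classical theory yields a global solution directly. This is not what the construction delivers: the transformation $G$ (your $\Psi$), built from the functions $g_1,\dots,g_d$ of Lemma \ref{lem:lip}, is only \emph{locally} invertible in a neighbourhood of $\{x_1=0\}$, and the transformed coefficients in \eqref{eq:SDEtrans} are only \emph{locally} Lipschitz (the drift of $Z^k$ involves products such as $\alpha_k\,\partial g_1/\partial x_k$ with merely locally bounded factors). Hence one obtains only a \usol{}; to reach a global one the paper concatenates local solutions via the strong Markov property (\cite[Theorem 3.2]{sz14}) and separately rules out explosion in finite time using the linear growth hypothesis (c5) (\cite[Theorem 3.3]{sz14}). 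Your argument as written skips this local-to-global passage entirely.

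Second, ``undoing $\Psi$'' is not a routine application of It\^o's formula: the local inverse $H$ of $G$ is not of class $C^{2,1}$, because its second derivative in the $z_1$-direction jumps across $\{z_1=0\}$ (this jump is precisely what absorbs the drift discontinuity). One must therefore prove an extension of It\^o's formula valid for functions that are $C^1$ with bounded but discontinuous $\partial^2/\partial x_1^2$ off the hyperplane; this is Lemma \ref{lem:ito} in the paper. Without such a lemma the transfer of the solution back from $Z$ to $\overline X$ is not justified. The remaining points you flag (regularity bookkeeping through $f$ and $e$, the reduction of the general hypersurface case to the flat case) are handled as you anticipate.
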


For the proof we refer to Appendix \ref{app:proof}.


\section{Application to the energy storage process}
\label{sec:Energy}

Now we apply the result from Section \ref{sec:Existence} to prove admissibility of the resulting threshold strategies for the energy storage optimization problem studied in Section \ref{sec:opt}. We consider the case of $D=2$ as in the numerical study to allow for more intuition, but remark that the result from Section \ref{sec:Existence} can be applied to the general case as well. The question of admissibility of the resulting strategies reduces to the question of whether the underlying processes exist, or equivalently, whether the system of SDEs , which describes the underlying processes, has a solution. In the notation of Section \ref{sec:Existence}, using $\pi^2=1-\pi^1$, this system has the following form
\begin{align}\label{eq:systemnewnot}
dX_t&=\alpha(X_t,t) dt + \beta(X_t,t) d W_t \,,
\end{align} 
\begin{align*}
\alpha(X_t,t)&=\nonumber
\begin{pmatrix}
\kappa \Bigl(\mu_1\pi_t^1+\mu_2(1-\pi_t^1)+K(t)-S_t\Bigr) \\
\underline u(Q_t) \1_{\{S_t \ge \underline s (S_t, Q_t,\pi_t,t)\}} + \overline u(Q_t) \1_{\{S_t \le \overline s (S_t,Q_t,\pi_t,t)\}}   \\
\lambda_{11}\pi_t^1-\lambda_{22}(1-\pi_t^1)
\end{pmatrix}\, , \,\quad
\beta(X_t,t)= \nonumber
\begin{pmatrix}
\sigma & 0& 0 \\
0  & 0& 0 \\
\frac{\kappa}{\sigma}(\mu_1-\mu_2)\pi_t^1(1-\pi_t^1) & 0 & 0
\end{pmatrix}\, ,
\end{align*} 
$X=(S,Q,\pi^1)^\top$, $X_0=(s,q,\nu_1)^\top$, and $W=(B,W^2,W^3)^\top$ is a three-dimensional standard Brownian motion.\\

From the study in Section \ref{sec:opt} we get that the optimal control attains three states, namely the state of selling energy, the state of waiting, and the state of buying energy.
$\underline s, \overline s$ describe the critical energy prices at which the optimal rate of charging switches between these states.

For applying our existence and uniqueness result, we need the following relation:
\begin{align*}
S_t \ge \underline s(S_t,Q_t,\pi^1_t,t) \Leftrightarrow S_t \ge \underline b (Q_t,\pi^1_t,t)\,,\\
S_t \le \overline s(S_t,Q_t,\pi^1_t,t) \Leftrightarrow S_t \le \overline b (Q_t,\pi^1_t,t)\,,
\end{align*}
where $\underline b, \overline b$ are the resulting levels where the strategy switches between the states, i.e., the switching levels are actually functions of $(q,\nu_1,t)$.
In Figure \ref{fig:barriers}, $\underline b$ is the boundary between the green and the blue areas, and $\overline b$ is the boundary between the green and the red areas.
\begin{figure}[h]
\begin{center}
\includegraphics[scale=0.3]{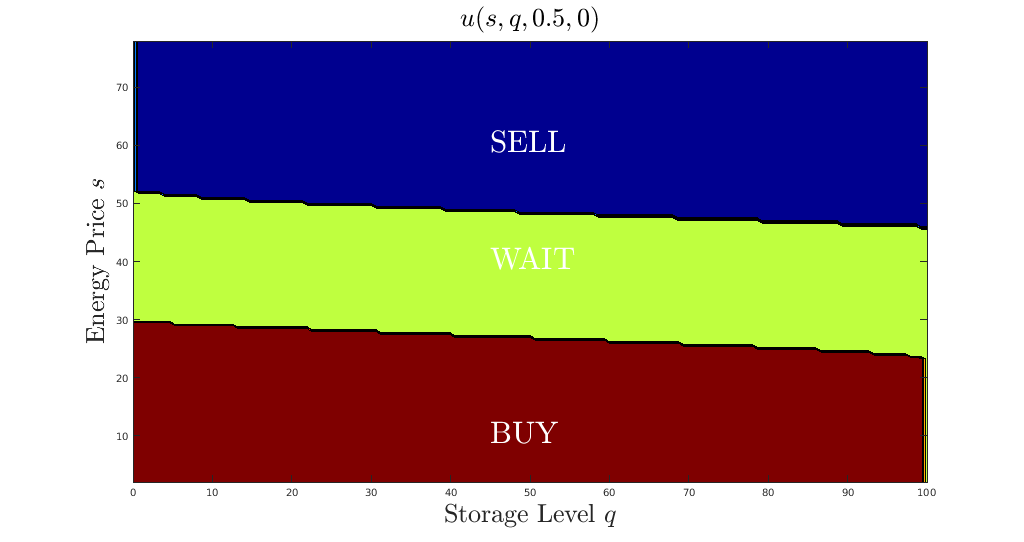}
\end{center}
\caption{The switching levels $\underline b$ and $\overline b$.}\label{fig:barriers}
\end{figure}

Theoretically, the existence of the functions $\underline b, \overline b$ can be proven by applying the implicit function theorem, where the crucial conditions are $  \frac{\partial \underline s}{\partial s} -1 \ne 0, \frac{\partial \overline s}{\partial s} -1 \ne 0$, or equivalently  $  V_{s q} \ne 1$ for $(s,q,\nu_1,t) \in \R \times [\underline{q},\overline{q}] \times[0,1]\times [0,T]$, and $\Phi_{sq}=c_S\ne1$ for $t=T$ in a neighbourhood of the switching levels. Figure \ref{fig:Vmixed} shows the mixed derivative of the resulting value function $V$. In our numerical examples the condition was always satisfied globally.
\begin{figure}[h]
\begin{center}
\includegraphics[scale=0.35]{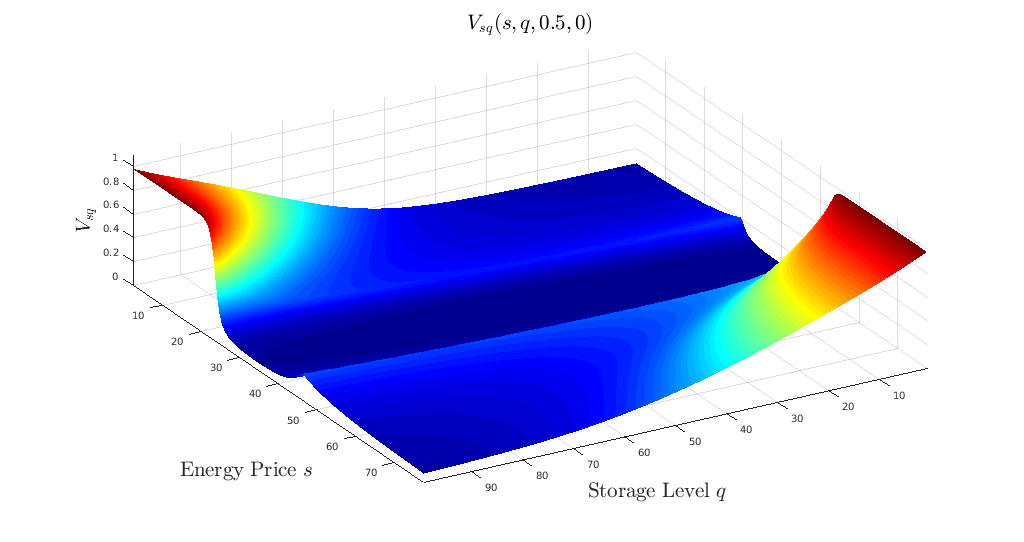}
\end{center}
\caption{The mixed derivative $V_{s q}$.}\label{fig:Vmixed}
\end{figure}

\paragraph{Existence and uniqueness proof.}

Now we prove the existence of a unique solution to system \eqref{eq:systemnewnot} using the result from Section \ref{sec:Existence}.\\

First of all, we define the functions $\underline f(s,q,\nu_1,t)=s-\underline b(q,\nu_1,t)$ and $\overline f(s,q,\nu_1,t)=s-\overline b(q,\nu_1,t)$.
With this, we can rewrite system \eqref{eq:systemnewnot} such that
\begin{align}\label{eq:systemnewnot2}
\alpha(X_t,t)&=
\begin{pmatrix}
\kappa \Bigl(\mu_1\pi_t^1+\mu_2(1-\pi_t^1)+K(t)-S_t\Bigr) \\
\underline u(Q_t) \1_{\{f(S_t,Q_t,\pi^1_t,t)\ge0\}} + \overline u(Q_t) \1_{\{f(S_t,Q_t,\pi^1_t,t)\le0\}}   \\
\lambda_{11}\pi_t^1-\lambda_{22}(1-\pi_t^1)
\end{pmatrix}\,.
\end{align}
Now we are ready to check whether Assumptions \ref{ass:f} and \ref{ass:coefficients} are fulfilled.
We begin with checking whether $\alpha$ is of class $C^{1,3,2}(\R \times ([\underline{q},\overline{q}]\times[0,1]) \times [0,T])$ outside the discontinuities. This holds since by assumption $K \in {C}^2([0,T])$, and since the functions $\underline u, \overline u$ are sufficiently smooth. Furthermore, $\beta_{i,j} \in C^{1,3,2}(\R \times( [\underline{q},\overline{q}]\times[0,1])\times [0,T])$ for $i,j=1,2,3$.
For the next condition we have to check whether $\underline f, \overline f \in C^{3,5,3}(\R \times( [\underline{q},\overline{q}]\times[0,1] )\times [0,T]  )$. Clearly, $\frac{\partial^3 \underline f}{\partial s^3} = \frac{\partial^3 \overline f}{\partial s^3} =0$.
However, for this condition to be satisfied, we also need that $\underline b, \overline b \in C^{5,3}(([\underline{q},\overline{q}]\times[0,1]) \times [0,T])$.

Furthermore, $\left|\frac{\partial \underline f}{\partial s} \right|, \left|\frac{\partial \overline f}{\partial s} \right|=1$.
It is crucial that the non-parallelity condition holds for both $(f,b)\in\{(\underline f, \underline b),(\overline f,\overline b)\}$:
\begin{align*} &\| \nabla  f(s,q,\nu_1,t) \cdot \overline \beta(s,q,\nu_1,t) \|^2 \\
 &=\|(1,- b_{q}(q,\nu_1,t),- b_{\nu_1}(q,\nu_1,t),- b_{t}(q,\nu_1,t)) \cdot 
\begin{pmatrix}
\sigma & 0& 0& 0\\ 0 & 0 & 0& 0\\ \frac{\kappa}{\sigma}(\mu_1-\mu_2)\nu_1(1-\nu_1)  & 0& 0& 0\\ 0& 0& 0& 0
\end{pmatrix}
\|^2 \\ &=| \sigma-  \frac{\kappa}{\sigma}(\mu_1-\mu_2)\nu_1(1-\nu_1) b_{\nu_1}(q,\nu_1,t)|\,,
\end{align*}
is positive, if $b_{\nu_1}(q,\nu_1,t) \neq \frac{\sigma^2}{\kappa (\mu_1-\mu_2)\nu_1(1-\nu_1) }$ for all $(q,\nu_1,t) \in [\underline q, \overline q]\times (0,1)\times[0,T]$,
where $\overline \beta$ is an extension of $\beta$ by a zero line and a zero row.
Note that we checked this condition in all our numerical examples and it was always satisfied.\\

Further we need $\hat \beta$ to be Lipschitz,
and that the linear growth condition holds for $\hat \alpha$ and $\hat \beta$, which is both fulfilled, if $\underline b, \overline b$ are sufficiently smooth functions on a bounded domain and since $K$ satisfies linear growth by assumption.

\begin{proposition}
If  for both $b\in\{\underline b, \overline b\}$, $b \in C^{5,3}(([\underline{q},\overline{q}] \times[0,1])\times [0,T])$ and $b_{\nu_1}(q,\nu_1,t) \neq \frac{\sigma^2}{\kappa (\mu_1-\mu_2)\nu_1(1-\nu_1) }$ for all $(s,q,\nu_1,t) \in \R\times[\underline q, \overline q]\times (0,1)\times[0,T]$, then system \eqref{eq:systemnewnot} has a unique global strong solution.
\end{proposition}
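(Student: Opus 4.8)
The plan is to deduce the claim directly from Theorem \ref{thm:ex} by verifying that Assumptions \ref{ass:f} and \ref{ass:coefficients} hold for the reformulated system \eqref{eq:systemnewnot2}, applied once for each defining function $f\in\{\underline f,\overline f\}$ with its associated switching level $b\in\{\underline b,\overline b\}$. The two hypotheses of the proposition are tailored precisely to the only non-automatic requirements of the theorem: the regularity assumption $b\in C^{5,3}(([\underline q,\overline q]\times[0,1])\times[0,T])$ is exactly what lifts $f(s,q,\nu_1,t)=s-b(q,\nu_1,t)$ to class $C^{3,5,3}$ (all $s$-derivatives beyond the first vanish), giving (f1), while $b_{\nu_1}\neq \frac{\sigma^2}{\kappa(\mu_1-\mu_2)\nu_1(1-\nu_1)}$ is what produces the non-parallelity condition (c3). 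The remaining items hold for structural reasons and need only routine checking.

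First I would dispose of the easy conditions. Condition (f2) is immediate since $\partial f/\partial s\equiv 1>0$, which moreover guarantees the inverse map $e$ appearing in \eqref{eq:hatcoeff}. For (c1), the matrix $\beta$ has only constant and polynomial entries, the single nonzero off-diagonal entry being $\frac{\kappa}{\sigma}(\mu_1-\mu_2)\nu_1(1-\nu_1)$, hence it is smooth; (c4) holds because on each side of the hypersurface the drift collapses to a fixed smooth expression built from $\underline u$, $\overline u$ and $K$, so the one-sided extensions $\alpha_\pm$ are of class $C^{1,3,2}$. Since all these functions, together with the derivatives of $b$ up to the required order, are continuous on the compact set $[\underline q,\overline q]\times[0,1]\times[0,T]$, they are bounded there, which yields the Lipschitz property (c2) of $\hat\beta$ and the linear growth (c5) of $\hat\alpha,\hat\beta$; the only unbounded direction is $s$, in which $\hat\alpha$ grows at most linearly through the $-\kappa S_t$ term and the linear growth of $K$. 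For (c3), the computation preceding the statement reduces $\|\nabla f\cdot\overline\beta\|^2$ to $|\sigma-\frac{\kappa}{\sigma}(\mu_1-\mu_2)\nu_1(1-\nu_1)\,b_{\nu_1}|$, which hypothesis (b) renders pointwise positive; a compactness argument, noting that the expression equals $\sigma>0$ at $\nu_1\in\{0,1\}$, upgrades this to a uniform lower bound $c>0$.

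The step I expect to be the genuine obstacle is that the drift of \eqref{eq:systemnewnot} is discontinuous along two hypersurfaces, $\{s=\underline b\}$ and $\{s=\overline b\}$, whereas Theorem \ref{thm:ex} is formulated for a single discontinuity surface. I would handle this by using that the two switching levels are separated by the waiting region, so $\overline b<\underline b$ and the surfaces are disjoint; one may then localize and invoke the theorem at each surface in turn, or equivalently observe that the flattening-and-transformation construction underlying Theorem \ref{thm:ex} goes through verbatim when the surfaces do not meet. A second, more technical point is that the natural state space is $\R\times[\underline q,\overline q]\times[0,1]$ rather than all of $\R^d$: here I would use that $Q$ never leaves $[\underline q,\overline q]$ when started inside, so its coefficients extend to $\R$ harmlessly, and that by \cite[Corollary 2.2]{chigansky2007} the filter $\pi^1$ does not reach $\{0,1\}$ in finite time, so the degeneracy of $\beta$ and the blow-up of the threshold in (b) at $\nu_1\in\{0,1\}$ are never met along trajectories. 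With these reductions, Theorem \ref{thm:ex} applies and yields the \uss{} to \eqref{eq:systemnewnot}.
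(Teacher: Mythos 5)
Your proposal is correct and follows essentially the same route as the paper: the verification of Assumptions \ref{ass:f} and \ref{ass:coefficients} (with $f=s-b$, the computation reducing (c3) to the stated condition on $b_{\nu_1}$, and the smoothness/linear-growth checks) reproduces the discussion preceding the proposition, and your treatment of the two disjoint switching surfaces is the same idea as the paper's proof, which makes the localization explicit by inserting a smooth separating hypersurface $\gamma$ between $\overline b$ and $\underline b$, applying Theorem \ref{thm:ex} on either side, and pasting the two solutions at $\gamma$ via the strong Markov property. Your additional remarks on the compactness argument for a uniform lower bound in (c3) and on the filter not reaching $\partial\bar\cS$ are consistent with what the paper uses elsewhere.
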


\begin{proof}
 It remains to extend the result from Section \ref{sec:Existence} to the situation with two discontinuities. As $\underline b, \overline b$ are distinct, i.e., they do not cross or converge to each other in $[\underline q,\overline q]\times [0,1] \times [0,T]$, we can find a hypersurface $\gamma \in C^{\infty}$ which lies between $\overline b, \underline b$. Furthermore, due to Theorem \ref{thm:ex}, a unique global strong solution to system \eqref{eq:systemnewnot} exists in the area below $\gamma$ as well as in the area above $\gamma$. In $\gamma$ the coefficients are the same for the limit from above and from below. Therefore, due to the strong Markov property, the solution from above and from below may be pasted together in $\gamma$.
\end{proof}

Thus, the resulting strategies are indeed admissible.\\

Note that on studying practical observations we found that the optimal value function is relatively invariant w.r.t. small changes in $\underline b, \overline b$. Therefore, we are optimistic that we can find $C^{5,3}(([\underline{q},\overline{q}] \times[0,1])\times [0,T])$ functions which are sufficiently close to the true -- possibly less smooth -- switching barriers such that the strategy switching at these $C^{5,3}(([\underline{q},\overline{q}] \times[0,1])\times [0,T])$ functions produces a reward function that is very close to the optimal one.

\section{Outlook}
\label{sec:Outlook}

On the one hand we have studied an optimization problem for an energy storage facility under partial information.
On the other hand we have shown how the issue of existence of uniqueness of processes controlled by policies of threshold type can be handled in general.
We emphasize that such kinds of SDEs appear in a wide range of applied mathematics. We will list some further fields in which our result or the results presented in \cite{sz15, sz2016b, sz14} are applicable.\\

A question arising in inventory management is the optimal stock holding problem, see, e.g., \cite{tempelmeier2008}.
There the stockpile $L=(L_t)_{t\ge0}$ is modelled as
\begin{align*}
dL_t=(u_t - D_t)dt\,, \qquad L_0=\ell\,, \quad t \in [0,T] \, ,
\end{align*}
where $D=(D_t)_{t\ge0}$ is the demand for the good in the stock, which can be modelled as a diffusion process independent of $L$. $u_t \in [0,M_u]$ is the amount of goods ordered for the stock at time $t$, bounded by $M_u$, which serves as the control. For the resulting threshold type controls, our result is necessary to prove existence and uniqueness of the two-dimensional process $(L,D)$.\\

In \cite{benth2007} the valuation of swing options on electricity markets is studied.
There, a similar SDE as in Section \ref{sec:Energy} appears:
\begin{align*}
dZ_t=u_t dt\,, \qquad Z_0=0 \,, \quad t \in [0,T] \, ,
\end{align*}
where $Z_t$ is the amount of energy available at time $t$. $u_t \in [0,M_u]$ is again the control, bounded by $M_u$, which describes the amount of energy that is bought in case the option is exercised, and which in addition to $Z$ also depends on the energy price, cf.~Section \ref{sec:Energy}. In this setup, the control is again of threshold type with one threshold level. Thus, again our result is potentially applicable there.\\

Overall we saw that threshold strategies appear in a wide range of fields of applied mathematics and hence the result presented herein potentially solves a large variety of problems.


\appendix
\section{Proof of the existence and uniqueness result}
\label{app:proof}

In principle we would like to apply the results from \cite{sz14}, where only the time-homogeneous case is considered, to prove existence and uniqueness in the time-inhomogeneous case. Herein, the conditions on the time dimension are less restrictive. To prove Theorem \ref{thm:ex} we need to adapt the techniques presented in \cite{sz14} to our situation. For this, we prove two lemmas.\\

First of all, we assume $\alpha$ to be discontinuous only along $\{x_1=0\}$, i.e., the special case $f(x,t)=x_1$.
For this case \eqref{eq:SDE2} and \eqref{eq:SDE3} coincide.
Then we seek to remove the discontinuity from the drift in a way such that the remaining coefficients are locally Lipschitz.

For this we define a transformation $Z=G(\overline X)$ with $G:\domain\longrightarrow\domain$ by
\begin{align*}
G(x_1,\ldots,x_d,t)&:=(g_1(x),x_2+g_2(x),\ldots,x_d+g_d(x),t)\,,
\end{align*}
and choose $g_1,\ldots,g_{d}$ in a way such that $\frac{\partial G_k}{\partial x_i}(0,x_2,\ldots,x_d,t)=\delta_{i,k}$ for $i,k=1,\ldots,d+1$. Then $G$ is locally invertible with local inverse $H$.
Thus, we can now consider the transformed SDE
\begin{align}\label{eq:SDEtrans}
dZ_t=\left(\nabla G(H(Z_t))\overline \alpha(H(Z_t)) 
+ \tr\left(\overline\beta(H(Z_t))^\top \nabla^2 G(H(Z_t))\overline\beta(H(Z_t))\right)\right)dt
+\nabla G(H(Z_t))\overline\beta(H(Z_t))dW_t\,.
\end{align}

\begin{lemma}\label{lem:lip}
Let Assumption \ref{ass:coefficients} hold.
Then there exist functions $g_1,\ldots,g_{d}:\domain \longrightarrow \R$ such that the coefficients of \eqref{eq:SDEtrans} are locally Lipschitz, and such that $G$ is locally invertible around $\{x_1=0\}$.
\end{lemma}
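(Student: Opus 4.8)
The plan is to adapt the Zvonkin-type transformation of \cite{sz14} to the time-inhomogeneous setting. The key mechanism is that applying It\^o's formula to $Z=G(\overline X)$ produces, via the second-order term $\tfrac12\tr(\overline\beta^\top\nabla^2 G\,\overline\beta)$ in \eqref{eq:SDEtrans}, an additional contribution to the drift that we can use to cancel the jump of $\overline\alpha$ across $\{x_1=0\}$. Concretely, I would build each $g_k$ piecewise on $\{x_1\ge0\}$ and $\{x_1\le0\}$, using $\alpha_+$ and $\alpha_-$ from (c4) respectively, and glue the two pieces so that $g_k$ and all of its first derivatives are continuous at $x_1=0$, while deliberately permitting a jump in the pure second derivative $\partial^2 g_k/\partial x_1^2$.

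First I would impose the normalization $\partial G_k/\partial x_i(0,x_2,\ldots,x_d,t)=\delta_{i,k}$, which makes $\nabla G$ equal to the identity on the hypersurface; by continuity $\nabla G$ is then nonsingular on a neighbourhood of $\{x_1=0\}$, and the inverse function theorem yields the local inverse $H$ claimed in the statement. Because $\nabla G\equiv I$ on $\{x_1=0\}$ and $\overline\beta$ is continuous, the jump of the transformed drift across the hypersurface equals, componentwise, $[\overline\alpha]_k+\tfrac12\tr(\overline\beta^\top[\nabla^2G_k]\,\overline\beta)$ evaluated at $x_1=0$. With only $\partial^2 g_k/\partial x_1^2$ jumping, this reduces to $(\alpha_+-\alpha_-)_k + \tfrac12(\overline\beta\overline\beta^\top)_{11}\,[\partial^2 g_k/\partial x_1^2]=0$. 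Here condition (c3) is exactly what is needed: in the present special case $f(x,t)=x_1$ it reads $(\overline\beta\overline\beta^\top)_{11}=\|\nabla f\cdot\overline\beta\|^2\ge c>0$, so the prescribed jump $[\partial^2 g_k/\partial x_1^2]=-2(\alpha_+-\alpha_-)_k/(\overline\beta\overline\beta^\top)_{11}$ is well defined even though $\overline\beta$ is degenerate. This is the step that tolerates the degeneracy of the diffusion.

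It then remains to verify that the transformed coefficients in \eqref{eq:SDEtrans} are locally Lipschitz. The diffusion $\nabla G\,\overline\beta$ is locally Lipschitz because, by the gluing, $\nabla G$ is continuous with locally bounded one-sided second derivatives, hence itself Lipschitz, while $\overline\beta$ is Lipschitz and bounded by (c2'); products and sums of such functions remain locally Lipschitz, and composition with the $C^1$ local diffeomorphism $H$ preserves this. The drift $\nabla G\,\overline\alpha+\tfrac12\tr(\overline\beta^\top\nabla^2G\,\overline\beta)$ is smooth away from $\{x_1=0\}$ on each side, thanks to $\alpha_\pm,\beta\in C^{1,3,2}$, and is continuous across the hypersurface by the cancellation above; continuity of a function that is smooth with locally bounded derivatives on either side of the interface then yields local Lipschitz continuity. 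Throughout I would choose the $g_k$ with the regularity in the time variable dictated by the $C^{3,5,3}$ and $C^{1,3,2}$ classes, so that no smoothness is lost in the time direction.

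I expect the main obstacle to be controlling the transformed \emph{drift} rather than the diffusion: the jump-cancellation only removes the discontinuity along $\{x_1=0\}$, and one must still guarantee that, after gluing, the one-sided derivatives of $g_k$ are bounded so that the full drift is genuinely Lipschitz — not merely continuous — on a neighbourhood, and simultaneously that $G$ stays invertible on that same neighbourhood and not only on the hypersurface itself. The delicate part is the bookkeeping that keeps these two requirements compatible while respecting the time-inhomogeneous smoothness classes inherited from Assumptions \ref{ass:f} and \ref{ass:coefficients}; this is precisely where the relaxation relative to \cite{sz14} must be checked carefully.
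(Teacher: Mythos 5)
Your proposal is correct in substance, but it takes a genuinely different route from the paper's proof. The paper follows \cite{sz14} and defines the $g_k$ by explicit exponential--integral formulas, $g_1(x,t)=\int_0^{x_1}\exp\bigl(-\int_0^\xi 2\alpha_1/(\beta\beta^\top)_{11}\,ds\bigr)\,d\xi$ and an analogous formula with an extra factor $C_k$ for $k>1$, chosen so that the offending combinations $\alpha_1\frac{\partial g_1}{\partial x_1}+\frac12(\beta\beta^\top)_{11}\frac{\partial^2 g_1}{\partial x_1^2}$ and $\alpha_k+\alpha_1\frac{\partial g_k}{\partial x_1}+\frac12(\beta\beta^\top)_{11}\frac{\partial^2 g_k}{\partial x_1^2}$ vanish \emph{globally}; local Lipschitzness of the remaining terms then follows because $\frac{\partial g_1}{\partial x_k}$ vanishes on $\{x_1=0\}$, together with \cite[Lemma 2.7]{sz14}, and the time derivative is handled by the $C^2$ regularity in $t$. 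You instead cancel only the \emph{jump} of the transformed drift across the hyperplane by prescribing the jump of $\frac{\partial^2 g_k}{\partial x_1^2}$ to be $-2(\alpha_+-\alpha_-)_k/(\beta\beta^\top)_{11}$ on $\{x_1=0\}$, and then invoke ``continuous plus piecewise $C^1$ across a hyperplane implies locally Lipschitz.'' That mechanism is sound and is essentially the construction later used in \cite{sz15,sz2016b} (a transform of the form $x\mapsto x+ h(x_2,\dots,x_d,t)\,x_1\lvert x_1\rvert$ times a cutoff); it has the mild advantage of requiring $(\beta\beta^\top)_{11}>0$ only on the hypersurface, whereas the paper's integrals use it everywhere, and it replaces \cite[Lemma 2.7]{sz14} by the elementary piecewise-$C^1$ argument. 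Two small points to tidy up: you appeal to (c2') (boundedness and Lipschitzness of $\overline\beta$), which belongs to the sufficient conditions of Remark \ref{rem:asscoeff} rather than to Assumption \ref{ass:coefficients} under which the lemma is stated -- for \emph{local} Lipschitzness the $C^{1,3,2}$ regularity of $\beta$ from (c1) already suffices; and you should state explicitly that your glued $g_k$ and all its first derivatives vanish on $\{x_1=0\}$ (automatic for the $x_1\lvert x_1\rvert$ ansatz), which simultaneously gives the normalization $\nabla G=I$ on the hypersurface, the continuity of $\nabla G$ needed for local invertibility, and the matching of the first-order terms in the drift across the interface.
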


The following proof is constructive, but note that the presented construction is not unique.

\begin{proof}[Proof of Lemma \ref{lem:lip}]
The construction of $g_1,\ldots,g_d$ is done like in \cite{sz14}.
For $g_1$ consider
\begin{equation}\label{eq:help1}
\begin{aligned}
dZ^1_t
&= \left[\alpha_1 \frac{\partial g_1}{\partial x_1} + \frac{1}{2} (\beta \beta^\top)_{11}
 \frac{\partial^2 g_1}{\partial x_1^2} \right] d t+  \sum_{i=2}^d \alpha_i \frac{\partial g_1}{\partial x_i} dt
 +\frac{\partial g_1}{\partial t} dt 
 \\ &+ \frac{1}{2} \sum_{i,j=1}^d(1-\delta_{11}(i,j)) (\beta \beta^\top)_{ij} \frac{\partial^2 g_1}{\partial x_i \partial x_j} dt +  \sum_{i,j=1}^d  \beta_{ij} \frac{\partial g_1}{\partial x_i} dW_t^j\,.
\end{aligned}
\end{equation}
Now choose
\begin{align*}
g_1(x,t) = \int_0^{x_1} \exp \left(- \int_0^\xi \frac{2 \alpha_1(s,x_2,\dots,x_d,t)}{(\beta \beta^\top)_{11}(s,x_2,\dots,x_d,t)}\, ds \right) d\xi \,,
\end{align*}
which ensures that $\alpha_1 \frac{\partial g_1}{\partial x_1} + \frac{1}{2} (\beta \beta^\top)_{11}
 \frac{\partial^2 g_1}{\partial x_1^2}$ vanishes globally and that for $k>1$, $\alpha_k \frac{\partial g_1}{\partial x_k}$ are locally Lipschitz since $\alpha_k$ is locally bounded and $\frac{\partial g_1}{\partial x_k}$ vanishes on $\{x_1=0\}$.
 Furthermore, $\frac{\partial g_1}{\partial t}$ is locally Lipschitz, since $g_1$ is $C^2$ in the time dimension due to Assumption \ref{ass:coefficients}.
 The terms in the second line of \eqref{eq:help1} are locally Lipschitz, which follows from Assumption \ref{ass:coefficients} together with \cite[Lemma 2.7]{sz14}.\\

For $k>1$ consider
\begin{align*}
dZ^k_t
&= \left[\alpha_k+\alpha_1 \frac{\partial g_k}{\partial x_1} + \frac{1}{2} (\beta \beta^\top)_{11}
 \frac{\partial^2 g_k}{\partial x_1^2} \right] d t+  \sum_{{i=2}}^d \alpha_i \frac{\partial g_k}{\partial x_i} dt
 \\ &+\frac{\partial g_k}{\partial t} dt + \frac{1}{2} \sum_{i,j=1}^d(1-\delta_{11}(i,j)) (\beta \beta^\top)_{ij} \frac{\partial^2 g_k}{\partial x_i \partial x_j} dt +  \sum_{i,j=1}^d  \beta_{ij}\left (1+\frac{\partial g_k}{\partial x_i}\right) dW_t^j\,,
\end{align*}
where we choose
\begin{align*}
g_k(x_1,\ldots,x_d,t) := \int_0^{x_1} C_k(\xi,x_2,\ldots,x_d,t)\exp \left(- \int_0^\xi
\frac{2\alpha_1(s,x_2,\ldots,x_d,t)}{(\beta \beta^\top)_{11}(s,x_2,\ldots,x_d,t)} \,ds \right) d\xi \,,
\end{align*}
with
\begin{align*}
C_k(\xi,x_2,\ldots,x_d,t):=
-\int_0^{\xi} \frac{2\alpha_k(\eta,x_2,\ldots,x_d,t)}{(\beta
\beta^\top)_{11}(\eta,x_2,\ldots,x_d,t)}\exp \left(\int_0^\eta
\frac{2\alpha_1(s,x_2,\ldots,x_d,t)}{(\beta
\beta^\top)_{11}(s,x_2,\ldots,x_d,t)} \,ds \right) d\eta\,.
\end{align*}
This guarantees that also the drift of $Z^k$ is locally Lipschitz.\\

In the time-direction we set $dZ_t^{d+1}=dt$.
This closes the proof.
\end{proof}

Since $H \notin C^{2,1}(\R^d \times \R_0^+)$ we need to show that It\^o's formula nevertheless holds for a class of functions including $H$.

\begin{lemma}\label{lem:ito}
Let $\gamma:\cD\subseteq \domain\longrightarrow\R$, $\gamma \in C^1(\cD)$, $\frac{\partial^2 \gamma}{\partial x_i \partial x_j}(x,t) \in C(\cD)$ for all $i=1,\dots,d, j=2,\dots,d$, $\frac{\partial^2 \gamma}{\partial x_1^2}(x,t) \in C((\domnew) \cap \cD)$, $\sup_{\{x \vert x_1\ne 0\}}|\frac{\partial^2 \gamma}{\partial x_1^2}(x,t)|<\infty$.
For an It\^o process $Y=(Y_t)_{t\ge 0}$, let $\zeta=\inf \{ t>0 \, \vert \, Y_t \notin \cD \}$.

Then for all $t \ge 0$
 \begin{align*}
\gamma(Y_t,t)&=\gamma(Y_0,0)+\frac{\partial \gamma}{\partial s}(Y_s,s) \,ds+\sum_{i=1}^d \int_0^{t\wedge \zeta} \frac{\partial \gamma}{\partial y_i}(Y_s,s)\,dY_s^i
+\frac{1}{2} \sum_{i,j=1}^d \int_0^{t\wedge \zeta} \frac{\partial^2 \gamma}{\partial y_i\partial y_j}(Y_s)\,d[Y^i,Y^j]_s\,.
\end{align*}

\end{lemma}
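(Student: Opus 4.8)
The plan is to reduce the claim to the classical It\^o formula by a mollification argument, the only delicate point being the term carrying $\frac{\partial^2 \gamma}{\partial x_1^2}$, which is merely bounded and one-sidedly continuous across $\{x_1=0\}$. First I would localize: setting $\tau_R:=\inf\{t>0:|Y_t|\ge R\}$ and working up to $t\wedge\zeta\wedge\tau_R$, it suffices to establish the formula on each compact space-time region and then let $R\to\infty$, so that all the pointwise-convergent families appearing below are uniformly bounded on the relevant range of $Y$. Next I would smooth $\gamma$: let $\rho_n$ be a standard mollifier and set $\gamma^n:=\gamma*\rho_n$ (convolution in the spatial variables, keeping the time variable exact). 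Each $\gamma^n$ is then of class $C^{2,1}$ on the localized region, so the classical It\^o formula applies,
\[
\gamma^n(Y_t,t)=\gamma^n(Y_0,0)+\int_0^{t\wedge\zeta}\frac{\partial\gamma^n}{\partial s}(Y_s,s)\,ds+\sum_{i=1}^d\int_0^{t\wedge\zeta}\frac{\partial\gamma^n}{\partial y_i}(Y_s,s)\,dY_s^i+\frac12\sum_{i,j=1}^d\int_0^{t\wedge\zeta}\frac{\partial^2\gamma^n}{\partial y_i\partial y_j}(Y_s,s)\,d[Y^i,Y^j]_s\,.
\]

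Because $\gamma\in C^1(\cD)$, the families $\gamma^n$, $\frac{\partial\gamma^n}{\partial s}$, and $\frac{\partial\gamma^n}{\partial y_i}$ converge uniformly on compacts to $\gamma$, $\frac{\partial\gamma}{\partial s}$, and $\frac{\partial\gamma}{\partial y_i}$, respectively; and since the mixed second derivatives $\frac{\partial^2\gamma}{\partial y_i\partial y_j}$ with $(i,j)\neq(1,1)$ are continuous by hypothesis, the same holds for those second-order coefficients. Consequently the first three groups of terms, together with every second-order term except the $(1,1)$ one, pass to the limit: the finite-variation integrals by ordinary dominated convergence, and the martingale parts of $\sum_i\int\frac{\partial\gamma^n}{\partial y_i}\,dY^i$ in $L^2$ by the It\^o isometry, using that the integrands converge uniformly on the compact range of $Y$ up to $t\wedge\zeta\wedge\tau_R$.

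The \emph{main obstacle} is the remaining term $\int_0^{t\wedge\zeta}\frac{\partial^2\gamma^n}{\partial y_1^2}(Y_s,s)\,d[Y^1]_s$, since $\frac{\partial^2\gamma}{\partial y_1^2}$ may jump across $\{x_1=0\}$. Two features rescue the argument. First, the spatial mollification keeps $\frac{\partial^2\gamma^n}{\partial y_1^2}$ uniformly bounded by $\sup_{\{x_1\neq0\}}|\frac{\partial^2\gamma}{\partial x_1^2}|<\infty$, while at every point with $x_1\neq0$ (a continuity point of $\frac{\partial^2\gamma}{\partial x_1^2}$) the supports shrink and $\frac{\partial^2\gamma^n}{\partial y_1^2}\to\frac{\partial^2\gamma}{\partial y_1^2}$ pointwise. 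Second, by the occupation time formula for the continuous semimartingale $Y^1$,
\[
\int_0^{t\wedge\zeta}\1_{\{Y^1_s=0\}}\,d[Y^1]_s=\int_{\R}\1_{\{0\}}(a)\,L^a_{t\wedge\zeta}\,da=0\quad\text{a.s.},
\]
where $L^a_{t\wedge\zeta}$ denotes the local time of $Y^1$ at level $a$, so that $\{s:Y^1_s=0\}$ is $d[Y^1]_s$-null. It is here that the hypothesis $\gamma\in C^1$ is essential: continuity of $\frac{\partial\gamma}{\partial x_1}$ across $\{x_1=0\}$ forces $\frac{\partial^2\gamma}{\partial x_1^2}$ to carry no singular (Dirac) part on the hypersurface, so no It\^o--Tanaka local-time correction is generated and the classical formula is recovered without an extra term.

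Combining these two facts, dominated convergence with respect to the finite measure $d[Y^1]_s$ on $[0,t\wedge\zeta\wedge\tau_R]$ yields convergence of the last term as well. Assembling all the limits turns the It\^o formula for $\gamma^n$ into the asserted identity on the localized region, and letting $R\to\infty$ (so that $\tau_R\to\infty$ a.s.) removes the localization and completes the proof.
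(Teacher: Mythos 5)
Your argument is correct and is essentially the proof the paper has in mind: the paper's own ``proof'' merely defers to the analogous lemma in \cite{sz14} (adding only the observation that $C^1$ regularity in time suffices), and that argument is precisely your route --- mollify, apply the classical It\^o formula, and pass to the limit using the uniform bound on $\frac{\partial^2\gamma}{\partial x_1^2}$ off $\{x_1=0\}$ together with the occupation-time formula, which makes $\{s:Y^1_s=0\}$ a $d[Y^1]_s$-null set. You in fact supply more detail than the paper does, including the correct explanation of why continuity of $\frac{\partial\gamma}{\partial x_1}$ across the hyperplane precludes a local-time correction term.
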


\begin{proof}
The proof runs along the same lines as in \cite{sz14}.
Additionally, we use the fact that for the regular It\^o formula,
we only require $\gamma \in C^1$ in the time-direction.
\end{proof}

Now, we are ready to prove the main theorem of Section \ref{sec:Existence}.

\begin{proof}[Proof of Theorem \ref{thm:ex}]
 We start with the case where $\alpha$ is discontinuous along $\{x_1=0\}$.
 Due to Lemma \ref{lem:lip} the coefficients of \eqref{eq:SDEtrans} are locally Lipschitz,
 which guarantees that the transformed SDE has a \usol.\\
 Furthermore, due to Lemma \ref{lem:ito}, It\^o's formula holds for the inverse $H$ of the transformation.
 By setting $\overline X=H(Z)$ and applying It\^o's formula to $H$, we get that \eqref{eq:SDE2} has a \usol.\\
 Now, we can directly apply \cite[Theorem 3.2]{sz14} which, by concatenating local solutions,
 guarantees existence and uniqueness of a unique maximal local solution to \eqref{eq:SDE2}.\\
 Additionally, we can directly apply \cite[Theorem 3.3]{sz14} to obtain that explosion does not happen in finite time.
 Hence, \eqref{eq:SDE2} has a \uss{} in case $\alpha$ is discontinuous along $\{x_1=0\}$.\\
 For the general case, where $\alpha$ is discontinuous along $\{(x,t)\in \domain:f(x,t)=0\}$,
 we have that due to Assumptions \ref{ass:f} and \ref{ass:coefficients}, system \eqref{eq:SDE3}
 has a \uss{} $\hat X$.
 Setting $X=e(\hat X)$ and applying the classical It\^o formula, we get that \eqref{eq:SDE2}
 has a \uss.
\end{proof}


\section*{Acknowledgements}

The authors thank R\"udiger Frey (WU Vienna), Gunther Leobacher (Johannes Kepler University Linz), and Ralf Wunderlich (BTU Cottbus-Senftenberg) for valuable discussions that improved this paper.\\
M. Sz\"olgyenyi is supported by the Vienna Science and Technology Fund (WWTF): Project MA14-031.\\
A part of this paper was written while M. Sz\"olgyenyi was member of the Department of Financial Mathematics and Applied Number Theory, Johannes Kepler University Linz, 4040 Linz, Austria.\\
During this time, M. Sz\"olgyenyi was supported by the Austrian Science Fund (FWF): Project F5508-N26, which is part of the Special Research Program "Quasi-Monte Carlo Methods: Theory and Applications".



\end{document}